\tikzset{
    >=latex,
    pil/.style={
            draw,
      <-, 
      decorate,
      decoration={snake,,amplitude=.02cm, pre length=.2cm,post length=.2cm,}
              }}
\definecolor{BluishGreen}{RGB}{0,158,115}
\title{{\bf \sc {{\color{darkblue}
Similarity of Information and Collective Action}}}\footnote{This research was supported in part by National Science Foundation award 2417694. Aditya Kuvalekar acknowledges British Academy Small Grant SRG2223\textbackslash231506. We are grateful to Nageeb Ali, Marco Battaglini, Steve Callander, Mehmet Ekmekci, Albin Erlanson, Daniel Garrett, Elliot Lipnowski, John Maxwell, Meg Meyer, Emre Ozdenoren, Jacopo Perego, Doron Ravid, Phil Reny, Vasiliki Skreta, Yu Fu Wong, Jidong Zhou, and conference audiences and seminar participants at Bocconi University, CERGE-EI, University of Chicago, Collegio Alberto, Essex University, HKUST, Kelley School of Business, Northwestern University, Oxford University, Penn State University, Princeton University, Stony Brook Game Theory meeting, SITE Political Economy, Toulouse School of Economics, University College London, University of Warwick, and Yale University for their helpful
comments and suggestions. Basak: Indiana University, Kelley School of Business, email: dbasak@iu.edu; Deb: New York University, email: joyee.deb@nyu.edu, Kuvalekar: University of Essex, email:  a.kuvalekar@essex.ac.uk.} \\}
\author{                   
\begin{minipage}{0.3\textwidth}\centering 
Deepal Basak\\ \centering \it \small Indiana University
\end{minipage}                  
\begin{minipage}{0.3\textwidth}\centering 
Joyee Deb   \\ \centering \it \small New York University
\end{minipage}                  
\begin{minipage}{0.3\textwidth}\centering 
Aditya Kuvalekar
\\ \centering \it \small University of Essex
\end{minipage}   
}
\date{\today}
\begin{document}

\maketitle
\bigskip
\begin{center}

\end{center}
\bigskip
\vspace{-1cm}
\begin{abstract}
\noindent We study a canonical collective action game with incomplete information. Individuals attempt to coordinate to achieve a shared goal, while also facing a temptation to free-ride. Consuming more similar information about the fundamentals can help them coordinate, but it can also exacerbate free-riding. Our main result shows that more similar information facilitates (impedes) achieving a common goal when achieving the goal is sufficiently challenging (easy). We apply this insight to show why insufficiently powerful authoritarian governments may face larger protests when attempting to restrict press freedom, and why informational diversity in committees is beneficial when each vote carries more weight.

\bigskip

{\it Keywords:} collective action, information similarity, regime change\vspace{0.2cm}

\textit{JEL Codes:} D82, D83
\end{abstract}
\thispagestyle{empty}
\newpage
\pagenumbering{arabic}

\begin{spacing}{1.05}

\section{Introduction}

This paper addresses the following question: When does increased similarity of information among participants help or harm participation in collective action? 

A collective action problem is a situation 
in which individuals want to achieve a common goal  but face a temptation to free-ride (see \cite{tullock1971paradox}, \cite{olson2012logic}), because reaching the goal requires a sufficient number of people to take a costly action, while the benefit accrues to all.\footnote{Examples of collective action problems are ubiquitous: protests, regime changes, boycotts, or voting in committees. See \cite{palfrey1985voter}, \cite{taylor1982chickens}, \cite{goeree2005explanation}, \cite{myatt2008does}, \cite{diermeier2008voting}, \cite{shadmehr2018protest}, \cite{dziuda2021difficulty} for some examples.} We consider collective action problems under incomplete information, in which individuals face uncertainty about the benefits of reaching the goal, and can privately learn about it. In such situations, an individual may not take the costly action even after learning that reaching the goal is socially beneficial. This is because her decision depends not only on what her private information tells her about the state (fundamental uncertainty), but also on what it tells her about the information that others have and thus what they will do (strategic uncertainty). 
For instance, an individual's private information may make her believe, at the same time, that reaching the goal is beneficial, but that others do not intend to take the costly action, rendering her own costly action ineffective. We  
investigate what happens to participation in  collective action when information becomes more similar, in the sense that agents know that others have private information with content similar to their own. 

Understanding the effects of changing similarity of information in strategic environments is particularly important against the backdrop of the extraordinary changes in how information is disseminated and consumed in recent years. Algorithms steer individuals to news or video content based on personal characteristics, resulting in like-minded people accessing the same information from the same source. A plausible conjecture is that if people with the same objectives now access the same  information, it may be easier for them to predict each other's actions and attain better outcomes.

Our central  observation is that greater similarity of information among agents, \emph{even those with identical preferences}, can serve as a double-edged sword in collective action games. On the one hand, if people believe that others are more likely to have the same information as them, they may be able to coordinate better to reach the common goal. On the other hand, the temptation to free-ride may be exacerbated: if an agent knows that others have the same information and predicts that they will take action, then  she does not need to take a costly action herself. 
Both these opposing effects have been observed separately in empirical work on protests, an archetypal collective action game. \cite{enikolopov2020social} find, using Russian data, that in cities where individuals accessed news from the same media platform, it led to more protests, and cities in which people were not all on the same platform faced fewer protests. However, in an experiment on  Hong Kong protests, \cite{cantoni2019protests} show essentially the opposite:  knowledge of others’ participation led to a stronger temptation to free-ride for potential participants. These opposite effects motivate our research question: when is information similarity helpful and when is it harmful in collective action games? Our main results characterize when increased information similarity helps or harms participation in a canonical regime-change game of incomplete information. To illustrate the main forces, let us start with a simple example.

\noindent \textbf{Example:} 
Abe and Bob are working on a project that is of high ($\stater=1$) or mediocre ($\stater=0$) quality with equal probability. They do not know the project quality, but each receives a private binary signal about it. For simplicity, we assume that a player never mistakes a mediocre project for a high-quality one but may mistake a high-quality project for a mediocre one. Formally, each player~$i$ receives a binary signal $\sigr_i$, such that  $\sigr_i = 0$ if $\stater = 0$, and if $\stater = 1$, the signals are drawn from some exchangeable joint distribution $\jcdf^1$. Importantly, the signals might not be conditionally independent.

After observing the private signals, the two players decide (independently and  simultaneously) whether to work or shirk. Working entails a cost of $c$. The project is completed with certainty if both players work  and completed with probability $q$ if only one works. If a high-quality project is completed, the players enjoy a benefit of $1$ each. If the project is mediocre or unfinished, they both get $0$. Table~\ref{Table: game payoff intro example} specifies the payoffs. For simplicity, suppose also that $c$ is high enough that players will never work after receiving signal $\sigr_i = 0$. 
\begin{center}
\begin{table}[ht]
    \centering
    \begin{tabular}{cc|c|c|}
            & \multicolumn{1}{c}{} &                     \multicolumn{2}{c}{Bob}\\
            & \multicolumn{1}{c}{} & \multicolumn{1}{c}{work} & \multicolumn{1}{c}{shirk} \\ 
            \cline{3-4}
   \multirow{2}{*}{Abe}     & work &$\theta-c,\theta -c$& $q\theta -c, q \theta$ \\ 
            \cline{3-4}
    & shirk & $q\theta,q\theta - c$ &$0,0$ \\ 
   \cline{3-4}
\end{tabular}
    \caption{Payoff matrix}
    \label{Table: game payoff intro example}
\end{table}
    \end{center}
\vspace*{-6mm}

In this example, notice that while both players enjoy the benefit from the completion of a high-quality project, each player may have an  incentive to free-ride  even if he believes the project is high quality. We ask first whether there is a symmetric, pure-strategy equilibrium~$\strategy^1$ in which a player works whenever $\sigr_i = 1$---that is, when he knows the project is of high quality. Then we ask, what happens if  the realized private signals that Abe and Bob observe become more similar, in the sense that each player knows that his private signal is now a better predictor of the other player's signal (and his consequent action choice).

For a player to be willing to work after observing $\sigr_i = 1$, we need \begin{align*}
    (1-q) \jcdf^1_1(\sigr_{-i} = 1 ) + q \jcdf^1_1(\sigr_{-i} = 0) \ge \cost.
\end{align*}
where $\jcdf^1_1$ is the conditional distribution of $\sigr_{-i}$ given that $\sigr_i = 1$.\footnote{Due to exchangeability of $\jcdf^1$, this conditional distribution need not be indexed by $i$.} 
Intuitively, conditional on observing $\sigr_i = 1$, player~$i$ works only if he believes he is pivotal (his effort makes a difference) with high-enough probability. Recall that if a player works, then his marginal contribution to the completion of the project is $(1-q)$ if the other player also works and $q$ if the other player does not work.  
The marginal benefit from effort depends on two primitives: (i) the extent to which individual effort can make a difference, measured by $q$, and (ii) a player's conditional belief about his opponent's signal (and implied equilibrium action). It is easy to see that there are $c$ and $\jcdf^1$ (when $\jcdf^1_1(\sigr_{-i}=1)<\frac12$) such that the following result holds: 
\begin{quote}
    There exists $q^* \in (0,1)$ such that for all $q \in [q^*,1]$, there is an equilibrium in which each player~$i$ works whenever $\sigr_i = 1$.\footnote{We provide a complete analysis of this example along with the parametric configurations supporting these results in Appendix~\ref{Appendix: intro example}.} 
\end{quote}
\noindent What happens when Abe's and Bob's signals become \emph{more similar?}, i.e., it becomes more likely that Abe and Bob see the same signals?  Formally, suppose that when the project is high quality ($\stater = 1$), the probability that both receive the same (different) signal increases (decreases) by $\alpha$. They continue to receive a signal $0$ when $\stater = 0$. Moreover, suppose that, for any given agent, the probability that he receives a signal $1$ conditional on $\stater =1$ remains unchanged. Figure~\ref{Figure: CAD increase, binary, intro} illustrates such a change in the joint distribution in state~$\stater=1$. 
\begin{figure}[h!]
    \centering
\begin{tikzpicture}
    \filldraw[blue] (0,0) circle (4pt);
    \filldraw[red] (0,2) circle (4pt);
    \filldraw[blue] (2,2) circle (4pt);
    \filldraw[red] (2,0) circle (4pt);
    \node at (0,-0.4) {  $0$};
    \node at (2,-0.4) {  $1$};
    \node at (-0.4,0) {  $0$};
    \node at (-0.4,2) {  $1$};
    \node at (0.35,0) {  $+\alpha$};
    \node at (0.35,2) {  $-\alpha$};
    \node at (2.35,0) {  $-\alpha$};
    \node at (2.35,2) {  $+\alpha$};
    \node at (1,-0.8) {  $X_1$};
    \node at (-0.8,1) {  $X_2$};
    \draw[thick,color=black,->] (0,2) -- (0,0.05);
    \draw[thick,color=black,->] (2,0) -- (2,1.95);
\end{tikzpicture}
    \caption{Increasing similarity when $\stater = 1$}
    \label{Figure: CAD increase, binary, intro}
\end{figure}

How does this increased similarity in the realized private signals affect players' incentives to work in equilibrium? The answer to this question depends on the value of $q$. A player works on a signal $1$ if 
\begin{align*}
    (1-q) \underbrace{\jcdf^1_1(\sigr_{-i} = 1 )}_{\uparrow \text{ with more similarity}} + q \underbrace{\jcdf^1_1(\sigr_{-i} = 0)}_{\downarrow \text{ with more similarity}} \ge \cost.
\end{align*}
When $q$ is high, with high likelihood  one player's effort suffices for project completion. In this case, after observing $\sigr_{i}=1$, a player~$i$ 
is less likely to be pivotal compared to before and has a stronger incentive to free-ride. When $q$ is low, after observing $\sigr_{i}=1$, a player~$i$ assigns a higher probability to being pivotal  and has a stronger incentive to work. Indeed, there exists an $\alpha$ which makes $\jcdf^1_1(\sigr_{-i}=1)>\frac12$ such that the following holds:
\begin{quote}
There exists $q^{**} \in (0,1)$ such that, with more similar information, there is an equilibrium in which each player~$i$ works after observing $\sigr_{i} = 1$ if and only if $q \in [0, q^{**}]$.
\end{quote}

\noindent That is, after observing $\sigr_i = 1$, under less similar signals, agents are willing to work if and only if $q$ is sufficiently high. With more similar information, the exact opposite is true: agents are willing to work if and only if $q$ is sufficiently low. Intuitively, a low value of $q$ captures environments in which not achieving coordination is the primary obstacle to collective action. In these settings, increasing information similarity proves beneficial. A high value of $q$ captures environments in which free-riding is the primary obstacle to collective action. In these settings, increasing information similarity may be harmful. 

Of course this example has several simplifying features. With binary signals, it was straightforward to define what it means to have more similar information. Moreover, the parameter~$q$ was a mechanical way of quantifying the relative importance of coordination (versus preventing free-riding). Finally, strategies are quite simple with binary signals and given no agent participates after a signal $0$. 
But it turns out that the essential insights from this example generalize. 

In the baseline model, we consider a canonical regime-change game. There are two states of the world: one in which regime change is beneficial for society, and one in which it is not.  There are two geographically dispersed groups of uncertain size. The groups have identical preferences  but have access to different information about the state. Each group receives a signal about the state, drawn from a finite set. Each individual sees only the signal of their own group and decides whether to participate in an action aimed to bring about regime change. Regime change is successful only if a (potentially random) threshold number of agents participate. The benefit of a successful regime change is public, but the cost of participation is borne only by the participants. In this canonical setting, we ask whether increased similarity of groups' signals increases or decreases participation. 

First, we need a notion that allows us to compare similarity between information structures. We use an order of information similarity, analogous to the one in the  example above. We say information is \emph{more similar} or has a \emph{higher concentration along diagonal} (CAD) when, conditional on observing signal realization~$x$, an agent believes it is more likely that others also observed the signal realization~$x$ and less likely that others observed a signal realization different from $x$.\footnote{For two-dimensional signals, our CAD order is equivalent to one proposed by \cite{meyer1990interdependence}. Like her, we require that the two signals have the same marginal distribution over states. } 

We then show that, analogous to low and high values of~$q$ in the  example, it is possible to cleanly partition the general setting into \emph{encouragement environments}, in which greater aggregate participation encourages participation by making it more likely that an individual can make a difference, and \emph{discouragement environments}, in which greater aggregate participation discourages participation by making it less likely that an agent can make a difference. 

Our main results establish that more similar information in the CAD order  increases participation in  encouragement environments but can reduce participation in discouragement environments. Formally, for any strategy profile, we define the participation (nonparticipation) set as the set of signals for which agents participate (do not participate). Given any information structure, our game can admit multiple equilibria, including, possibly equilibria with strategies that are non-monotonic in signals. 
We prove that more similar information enlarges the participation set in the equilibrium with maximal participation in encouragement environments (Theorem~\ref{Theorem: CAD helps in encouragement}) but can shrink it 
in discouragement environments (Theorem~\ref{Theorem: CAD hurts in discouragement}).

We also characterize these two environments. Under an intuitive single-crossing condition, it turns out that  resilient regimes, in which regime change requires a large number of participants, constitute  encouragement environments and weak regimes, in which regime change requires a small number of participants, constitute discouragement environments. So  the effect of changing information similarity is qualitatively
different based on the resilience of regimes. Recall the mixed empirical evidence regarding how modern information technology has affected protests.  Our results imply that increased similarity of information may have facilitated greater participation in collective action against regimes previously thought to be impregnable but, at the same time, hindered movements with ex-ante easier goals.

We apply our framework to two applications. First, we study mass protests \cite[as in][]{shadmehr2018protest} and investigate the effect of increased information similarity on the likelihood of a successful protest and consequent welfare. Notice that even when increased similarity of information increases participation, it may not improve welfare or the likelihood of success. This is because increased similarity of information increases the probability  that everyone participates and  the probability that no one participates, making the  overall effect on the probability of  success unclear a priori. Welfare effects are also ambiguous because even if increased participation makes regime change more likely, the marginal social benefit may be lower than the marginal cost. Specializing to an environment with a deterministic participation threshold and Poisson uncertainty about group sizes we show that increasing similarity can never lower the maximal probability of a successful protest or welfare in encouragement environments but can do so in discouragement environments. These findings have important implications for press freedom and authoritarian governments. When authoritarian governments curb press freedoms, they effectively reduce the similarity of information across individuals. Our results suggest that it is precisely the resilient authoritarian governments that benefit more from doing this. Less powerful regimes may face larger protests and a higher likelihood of being overthrown when attempting to restrict press freedom. 

Second, we apply our framework to a setting of costly voting in a committee \cite[as in][]{palfrey1985voter}. We consider a monetary policy committee with $n>2$ members who must vote on whether to raise interest rates. A rate increase is implemented only if a threshold number of members vote for it. Public opinion is against any rate increases, and committee member votes are observed ex post, making it costly for any member to vote in favor of a rate increase. Committee members base their decision on their private information about whether a rate increase is warranted based on the state of the economy. Their information can be more or less similar  depending on their backgrounds---academic backgrounds or areas of expertise that makes them focus on different aspects of the available evidence. 
How do changes in member diversity affect the voting outcome? In this setting, there is no uncertainty about the number of participants or the vote threshold. And importantly, we need to move beyond bivariate signals. We develop a natural analog of our CAD order to compare multivariate binary signals, and use it to establish results analogous to Theorems~\ref{Theorem: CAD helps in encouragement} and \ref{Theorem: CAD hurts in discouragement}. We show that conditional on the
rate increase being warranted, a more diverse committee (with less similar information) strengthens the incentive to vote correctly  and increases the
maximal equilibrium number of votes in favor of a rate increase  if each individual
vote carries enough weight---that is, fewer votes are required for a rate hike. Conversely, if an individual vote carries very little weight, a more diverse committee weakens the incentive to vote correctly.\footnote{Unlike Theorems \ref{Theorem: CAD helps in encouragement} and \ref{Theorem: CAD hurts in discouragement}, however, with more than two groups, the characterization is only partial---that is, not every environment can be classified as encouragement or discouragement. We provide more details in Section~\ref{Section: robustness}.} We also examine how changes in information similarity influence the optimal voting threshold rule.

We conclude by examining some other extensions of our baseline model.

\medskip

\subsection{Related Literature}\label{sec:relatedliterature}

A large literature going back to at least \cite{H71}  studies how exogenous change in the information environment changes agents' incentives in strategic environments. More recently, \cite{morris2002social}, \cite{angeletos2007efficient}, \cite{bergemann2013robust}, \cite{jensen2018distributional}, and \cite{mekonnen2022bayesian} have studied this question in a class of games with monotone best responses (pure complementarity or substitutability). Monotonic best responses imply that extremal equilibria are in monotone strategies (see, for example, \cite{van2007monotone}). This literature provides insights about how changes in the exogenous information structure affect the monotone equilibria and welfare. Since a canonical collective action game involves coordination and free-riding motives, the best response is nonmonotonic, making the established tools unsuitable. Moreover, much of the literature cited above focuses on the effect of new public information. Importantly, we study how the equilibrium set changes with \emph{changes in similarity of information}.

We consider an arbitrary signal structure and propose  CAD as a natural order of information similarity. 
The existing literature contains other measures of the interdependence of joint distributions (for example,  \cite{muller2002comparison}, \cite{meyer2012increasing}), but none are appropriate for comparing the \emph{conditional}  belief distributions that arise in strategic settings with incomplete information. For the bivariate case (the focus of this paper), the CAD order is the same as that in \cite{meyer1990interdependence}. 
Like her, we consider multivariate random variables with fixed marginal distributions while changing the joint distribution. 
\cite{clemen1985limits} and \cite{CB24} study how such changes impact the value of information and show that informational diversity may be valuable. \cite{de2023robust} consider an environment with known marginal distributions but unknown joint distribution to obtain the robustly optimal policy in a class of decision problems. \cite{awaya2022common} study the effect of the interdependence of signals on common learning as do 
\cite{cripps2008common}. They show that essentially any interdependence obstructs common learning.
 
We also contribute to the sizable literature on protests and voting. The majority of the theoretical work on protest focuses on the coordination aspect.  However, some recent papers---for example, \cite{shadmehr2018protest}, \cite{dziuda2021difficulty}, and \cite{parkglobal2022}---incorporate free-riding and construct cutoff equilibria. \cite{dziuda2021difficulty} show that 
a lower required participation threshold for success might not increase the likelihood of a successful protest. \cite{mutluer2024lead} shows that a higher cost of participation may lead to larger protest. These papers do not consider changing information environments, which is our focus. Some recent empirical papers have studied the effect of modern communication technologies on the size of protests. For instance, \cite{manacorda2020liberation} show empirical evidence that mobile phones facilitated protests in Africa, and \cite{enikolopov2020social} show that the diffusion of an online social network increased protest turnout in Russia. However, \cite{cantoni2019protests} demonstrate, in a recent experiment about   mass protesters in Hong Kong, that the knowledge of others' participation led to a stronger temptation to free-ride for potential participants. These ambiguous empirical results underscore the importance of our research question. We provide a clear characterization of when information similarity helps and when it hurts. Our results have implications for the effect of press freedom on protests. \cite{edmond2013information} considers a game of protest in which the regime can manipulate information. However, in this game, the agents only have the coordination motive and no free-riding motive. In our voting application, we study how informational diversity affects voting incentives. \cite{taylor2010public} and \cite{roesler2022committee} study similar questions. In these papers, if no other agent votes, then an agent can always get her desired outcome by voting. In contrast, in our setup, if others do not vote, then an agent alone can never get her desired outcome. \cite{chemmanur:2018} present a model of voting on a corporate board that closely resembles our application. In a recent paper, \cite{kattwinkel2023} characterize the optimal voting rule under conditionally independent signals. We restrict attention to simple threshold voting rules but show how information similarity affects the optimal voting rule.

\section{A Regime-Change Game}\label{sec:regimechangegame}
There are two states of the world: $\stater\in \states= \{0,1\}$. In $\stater=1$, it is socially beneficial to change the regime. In $\stater=0$, it is not socially beneficial to change the regime. An alternative interpretation of $\stater$ is that it captures whether a regime change is feasible ($\stater=1$) or not ($\stater = 0$). Society consists of $\ngroups$ groups. For most of the paper, we consider $G=2$. The analysis for more than two groups is relegated to the online appendix.
We introduce population uncertainty \`a la \cite{myerson1998population}: the number of agents in any group~$g$, denoted by $\popr_g$, is a $\mathbb Z_+$-valued random variable with probability mass function $\ngpmf(\cdot)$ and mean $\pop$. Agents do not observe the size of their own group or other groups. We let $\ngpmfag(\cdot)$ denote the conditional probability mass function of $\popr_g-1$ according to an agent in group $g$.\footnote{
In general, $\ngpmfag(\cdot)$ can be different from $\ngpmf(\cdot)$ because an agent may be more or less optimistic about the size of  her group, conditional on belonging to the group herself, compared to what someone outside the group believes about the group size. As an example, if 
 $\ngpmf(\cdot)$ were drawn from a Poisson distribution, then $\ngpmfag(\cdot)$ and $\ngpmf(\cdot)$ would coincide.   } 
 
Each agent decides whether to take a costly action. Participating (choosing $a = 1$) costs $c > 0$, while not participating ($a=0$) is costless. Regime change occurs only if enough agents participate. We assume that the threshold participation required to change the regime is an $\mathbb N$-valued random variable, $\thresholdr+1$, where $\thresholdr$ follows  probability mass function $\tpmf(\cdot)$, and  agents do not observe the realization of $\thresholdr+1$. We call $\thresholdr$ the \emph{resilience} of the regime.  We allow population uncertainty and a random (unobserved) resilience because this is more realistic in many regime-change settings. However, in Section~\ref{Section: voting} we present an application with observable and deterministic group sizes and participation threshold. We summarize the payoffs in the matrix below, in which $\threshold+1$ denotes the realized resilience and $\Action$ denotes the number of agents who participate.
\medskip
\begin{center}
\begin{tabular}{ |c|c|c| } 
 \hline
    & $\Action \ge \threshold +1$ & $\Action \le \threshold$ \\ 
    \hline
 $a=1$ & $\state-c$ & $-c$\\ 
 \hline
 $a= 0$ & $\state$ & $0$ \\ 
 \hline 
\end{tabular}
\end{center}

Note that regime change when it is not beneficial~(in $\stater = 0$) entails no additional costs beyond the costs of participation. This assumption can be relaxed (see  Section~\ref{Section: robustness}). Finally, $\stater,\popr_1,\popr_2,\thresholdr$ are independent.

\subsection{Information Structure}\label{sec:information}
Before deciding whether to participate, agents receive information about the state of the world. There is a fixed, finite set of signals~$\signalset$. Each group receives a signal~$\sigr_g$ drawn from~$\signalset$, and every agent in group~$g$ observes only the signal received by their own group. Let $\sigr:= (\sigr_g)_{g\in\groupset}$. 
We denote the joint distribution of $(\stater,\sigr)$ by $\prob(\cdot) \in \Delta(\states \times \signalset^\ngroups)$, and the distribution of $\sigr$ conditional on $\stater = \state$ by $\jcdf^\state \in \De(\signalset^\ngroups)$. As we will describe in Section~\ref{Section: similarity}, similarity of information in our context will simply be a measure of interdependence of $\sigr$. 

We assume that $\sigr_1,\sigr_2$ are independent of $\popr_1,\popr_2$ and $\thresholdr$. That is, a group's  signal conveys information about the state to the agents but not about their or the other group's realized size or the threshold. Let $\prior:= \prob(\{\stater = 1\})$. We denote by $\margdiststate_g \in \De(\signalset)$ the distribution of $\sigr_g$ given that $\stater=\state$. We assume that the distribution of $\jcdf^\state(\cdot)$ is exchangeable so that we have $\margdiststate_g = \margdiststate_{g'} =: \margdiststate$ for all $g,g'\in \groupset$ and all $\state \in \states$.

Let $\mu(\signal):= \prob(\{\stater = 1\} \vert \{\sigr_g=\signal\})$ denote the posterior probability that any agent in group $g$ assigns to the state's being $1$ given a realized signal $\signal$. We assume that $\mu: \signalset \to [0,1]$ is injective. 
Given group~$g$, we let $\sigr_{-g}$ 
be the random variable denoting the signal of the other group. 
Let $\ccdf^\state_\signal\in \Delta (\signalset)$ denote the conditional distribution of $\sigr_{-g}$ 
given state $\stater=\state$ and $\sigr_g = \signal$. Since $\margdiststate$ is exchangeable, $\ccdf^\state_\signal(\cdot)$ is the same for every group.

\subsection{Strategies and Aggregate Participation}
\textbf{Strategies:} A (pure) strategy of agent $i$ from group $g$ is a mapping, $$\strategy_g:\signalset\to \{0,1\}.$$ 
That is, we assume symmetric strategies within a group. We restrict attention to pure strategies throughout the paper, until Section~\ref{Section: robustness}, in which we discuss the extension to mixed strategies. Given a strategy profile $\strategy = (\strategy_1,\strategy_2)$, we define the \emph{participation set of $\strategy_g$ for group $g$}, denoted by $P(\strategy_g)$, to be the set of signals such that $\strategy_g(\signal)= 1$. 
Analogously, we define the \emph{nonparticipation set of $\strategy$}, denoted by $NP(\strategy_g):= \signalset\backslash P(\strategy_g)$. When the dependence on $\strategy_g$ is obvious, we denote $P(\strategy_g)$ and $NP(\strategy_g)$ by $P_g$ and $NP_g$, respectively. 
\medskip
\newline
\textbf{Aggregate participation:} Since group size is random, we let $\Actionr$ denote the random variable corresponding to the total number of participating agents, given a strategy profile. 
$$\Actionr := \sum_{g=1}^{\ngroups} \popr_g \strategy_g(\sigr_g)$$
We call $\Actionr$ the aggregate participation. Notice that $\Actionr$ is $(\popr_1,\popr_2,\sigr)$-measurable and depends on $\strategy$. When this is obvious, we suppress $\strategy$. 
All agents in  group $g$ receive the same signal and have the same belief. Let $\Actionr_{-g}(\signal_g,\signal_{-g};\strategy)$ be the aggregate participation according to an agent in group $g$, excluding herself. Then we have 
\begin{eqnarray*}
\Actionr_{-g}(\signal_g,\signal_{-g};\strategy) &:=& \popr_{-g} \strategy_{-g}(x_{-g}) + (\popr_g-1) \strategy_g(x_g)\\ 
&=&\popr_{-g} \ind_{\sig_{-g} \in P_{-g}} + (\popr_g-1) \ind_{\sig_g \in P_g}.
\end{eqnarray*}
\noindent
\textbf{Expected aggregate participation in $\stater=1$: }

For any $\strategy
$, define 
\begin{align}
    \protsize(\strategy):= \E[\Actionr(\sigr;\strategy)\vert \stater = 1]
\end{align}
to be the expected aggregate participation in state $1$, which is the state in which it is beneficial to change the regime. For a fixed $\strategy$, $\protsize(\strategy)$ depends only on the marginal distribution of the signals and not on the joint distribution. So information similarity affects $\protsize$ only by affecting the equilibrium $\strategy$. We state this in the following lemma. The proof is in the appendix. With some abuse of notation, let $\mcdf^1(S)$ denote  $\prob(\sigr_g \in S \vert \stater = 1)$.

\blemma\label{Lemma: Expected protest size} For any $\strategy$, with associated participation sets $(P_g)_{g \in \groupset}$, 
$$\protsize(\strategy) = \pop\sum_{g=1}^\ngroups \mcdf^1(P_g).
$$
Therefore, when $\ngroups = 2$, we can say 
$\protsize(\strategy) = 2[ \mcdf^1(P_1 \cup P_2) + \mcdf^1(P_1 \cap P_2)]$.
\elemma
 \noindent
\textbf{Solution concept:} 
We consider Bayes Nash equilibria in pure strategies. We do not impose any additional structure on the equilibrium, such as monotone or symmetric strategies. Multiple equilibria may exist, including one in which no one participates regardless of the signal. 

Notice that since we have a collective action problem, best responses are not monotonic in aggregate participation, unlike in other regime-change games with only strategic complementarities \cite[][for example]{morris2002social}. Therefore, it is not clear whether equilibria can be ordered in any natural way. This means we cannot use existing tools---such as those used in supermodular games---directly. Let $\eq(\jcdf)$ be the set of strategy profiles that constitute an equilibrium under information structure $\jcdf$. 

Given multiple equilibria, we focus on how increased similarity affects the maximal possible participation in any equilibrium. Accordingly, we define the following.
\bdefn[\textbf{Maximal Participation Equilibrium and Maximal Equilibrium Aggregate Participation}] We say that an equilibrium $\strategy^*$ is a maximal participation equilibrium 
if $\protsize(\strategy^*) \ge \protsize(\strategy)$ for all $\strategy \in \eq(\jcdf)$.\footnote{Since the set of signals is finite and we look at pure strategies, the existence of a maximal aggregate participation equilibrium is guaranteed. } 
Let $\protsize^*(\jcdf)$ denote the expected aggregate participation (in state~1) in the maximal participation equilibrium given information structure $\jcdf$, and call it the maximal equilibrium aggregate participation. 
\edefn 

Both the expected aggregate participation given a strategy~~$\protsize(\strategy)$ and the maximal equilibrium aggregate participation for an information structure $\protsize^*(\jcdf)$ are defined \emph{conditional on~$\stater=1$}; that is,  when change is beneficial. For brevity, henceforth, we do not mention this explicitly. $\protsize(\cdot), \protsize^*(\cdot)$ also depend on other parameters, such as $\thresholdr$. We typically suppress this dependence and only make the dependence on the information structure explicit.\footnote{Most of our results about the expected aggregate participation remain unchanged if we used the ex-ante expected aggregate participation rather than expected aggregate participation when $\stater=1$.} 

Arguably, instead of studying maximal aggregate participation, we could have focused on the maximal equilibrium probability of successful collective action or maximal welfare. One reason to focus on participation is its empirical relevance. Recent work in political science on collective action measures aggregate participation in mass protests.\footnote{For instance, several empirical studies \cite[][e.g.]{enikolopov2020social}, as well as popular press articles (e.g.,  \url{https://www.nytimes.com/2023/11/22/opinion/does-protest-work-bevins.html}) and even datasets that document protests over time \cite[][e.g.]{DVN/HTTWYL_2016} record turnout.} 
In Section~\ref{sec:resilience of regimes}, we study also the effect of information similarity on the probability of successful regime change and on welfare (in the canonical case where $\popr_1,\popr_2$ are Poisson distributed), and show that results analogous to our main results continue to hold.

\subsection{A Measure of Information Similarity}\label{Section: similarity}
Given our research question, we need a notion of informational similarity. Agents make participation decisions based on their beliefs about the state of the world and the expected aggregate participation. So they must reason about the \emph{conditional} probability of others' information given their own. We use the following similarity order for two-dimensional random variables using such conditional beliefs.

\bdefn[\textbf{Concentration Along Diagonal (CAD)}]\label{Definition: CAD} Let $\mathcal Y \subset \real$ be a finite set. Let $Y$ and $\widehat Y$ be two $\mathcal Y^2$-valued exchangeable
random variables whose distributions are given by $\ydist$ and $\yhatdist$, respectively. We say $Y$ is more similar than $\widehat Y$ in the CAD order, denoted by $Y\cad \widehat Y$ or $\ydist \cad \yhatdist$, if the following two conditions hold.
\begin{enumerate}
    \item $Y_i$ and $\widehat Y_i$ are identically distributed for all $i\in\{1,2\}$.
    \item For $y \in \mathcal Y$ and $T \subseteq \mathcal Y$,  
    \begin{enumerate}
        \item $\ydist(Y_2 \in T\vert Y_1 = y) \ge \yhatdist(\widehat{Y_2} \in T \vert \widehat{Y_1} = y)$ if $y \in T$.  
        \item $\ydist(Y_2 \in T\vert Y_1 = y) \le \yhatdist(\widehat Y_2 \in T \vert \widehat Y_1 = y)$ if $y \notin T$.
    \end{enumerate}
\end{enumerate}
\edefn 

Notice that, by exchangeability of the distributions, we can interchange $Y_1$ and $Y_2$ in the definition. We  use the CAD order to compare $\ccdf^\state_\signal(\cdot)$, the beliefs of players conditional on a state, and a realized signal, keeping the marginal beliefs conditional on $\stater = \state$ by $\margdiststate$ unchanged.\footnote{We analyze the effect of increasing similarity of $\ccdf^1_\signal(\cdot)$. This is because, given the definitions of states and payoffs in our game, changing $\ccdf^0_\signal(\cdot)$ is not payoff relevant if marginals~$\margdiststate$ are unaltered. See Section~\ref{Section: robustness} for a more detailed discussion.}

The CAD order captures the idea that when information becomes more similar, any agent believes that it is now more likely that others received the same signal as they did. Notice that agents face two types of uncertainty in our environment: fundamental uncertainty about $\stater$, and strategic uncertainty about the other group's information. Part~1 in the definition means that an increase in CAD keeps the fundamental uncertainty unchanged, and potentially varies only the strategic uncertainty. 

The large literature that studies the value of \emph{public} information in coordination games \cite[for example,][]{morris2002social} alters both fundamental and strategic uncertainty at once since public information changes both, the joint and the marginal distribution at once. 
Our approach of exploring the implications of varying joint distributions while keeping the marginal distributions fixed does arise in some existing papers. \cite{clemen1985limits} study the value of information in a class of decision problems where the noises are jointly normally distributed. Like us, they fix the marginal distribution and vary the correlation structure. \cite{CB24} consider a general information structure with fixed marginals while varying correlations conditional on the state.\footnote{\cite{de2023robust} also study a robust decision problem with known marginals and unknown correlation that is similar in spirit.} 
By keeping the fundamental uncertainty unchanged, our formulation isolates the effect of strategic uncertainty. In Section \ref{Section: robustness} we demonstrate how our results are robust to allowing the marginal distributions to also change. 

To gain more intuition about the notion of CAD, it is useful to consider a simple numerical example of two information structures that are CAD-ordered. Suppose~$(\widehat Y_1,\widehat Y_2)$ is an information structure with $\widehat Y_i$'s being conditionally independent. Now consider a new information structure $Y$, with the same marginal distribution as $\widehat Y$. But now the $Y_i$'s are conditionally independent with probability~$1-\varepsilon$ for some $\varepsilon>0$ and perfectly correlated with probability~$\varepsilon$. $Y$ is more similar than $\widehat Y$ in the CAD order. To see how such a CAD increase in information similarity can arise in practice, consider how people have converged on where they get information from. 
For instance, with YouTube being the dominant player and algorithms steering content to users, it becomes more likely (higher $\varepsilon$) that people now view exactly the same content. This corresponds to a CAD increase.

An equivalent way of stating Definition~\ref{Definition: CAD} is that $Y\cad \widehat Y$ if the probability of events in which $\widehat Y_1 ,\widehat Y_2$ are \emph{exactly equal} must (weakly) increase under $Y$, and the probability of events in which $\widehat Y_1 ,\widehat Y_2$ are unequal must (weakly) decrease under $Y$. This alternate formulation is equivalent to an order by 
\cite{meyer1990interdependence}. The requirement that the two agents see exactly the same signal with a higher probability may be too strong in some contexts, making the CAD order quite incomplete.  Indeed, two information structures~$Y$ and $\widehat Y$ are not comparable in the CAD order if the probability of events in which  $\widehat Y_1 ,\widehat Y_2$ are \emph{very close in value} increased under $Y$ and the probability of events in which $\widehat Y_1 ,\widehat Y_2$ are \emph{significantly different} decreased.
We discuss the implications of using alternative, more complete orders after presenting our main results.

Finally, notice that we construct posteriors explicitly using the signals, rather than model signals as posteriors themselves, as is now standard, following \cite{Kamenica2011}. Our assumption that $\posterior(\cdot)$ is injective implies that signals and posteriors are interchangeable. We could have chosen signals as posteriors and started with a \emph{feasible} joint distribution over posterior beliefs instead. With more than one agent, characterizing the feasible joint distributions over posteriors is not trivial. Recently, \cite{arieli2021feasible} characterize the set of feasible two-dimensional joint distributions.  By working with signals directly and performing the CAD operations, we have feasible joint distributions by construction.

\section{Information Similarity and Participation}
In this section, we present our main results that  characterize how increased similarity of information affects participation in equilibrium. 
\subsection{Preliminaries}

In an equilibrium of our regime-change game, an agent is willing to bear the cost of participation if and only if she believes that it is sufficiently likely that a change is beneficial and that her participation will make a difference. 

Consider an agent in group $g \in \{1,2\}$. She believes that the size of the other group is $\popr_{-g}\sim \eta(.)$. However, that she belongs  to group $g$ may change her belief about the size of her own group. She believes that $\popr_{g}-1\sim \ngpmfag(\cdot)$. 
Let $\totalnpmfag(\cdot)$ denote her belief about $\popr_{g}-1+\popr_{-g}$.\footnote{Since $\popr_1$ and $\popr_2$ are i.i.d., this distribution does not need to be indexed by group $g$.} The agent's participation incentive depends on whether she expects her own participation to make a difference to the protest's outcome. Below, we define  expressions~$\pivtwo$, $\pivone$, and $\pivother$, which denote, respectively, the probabilities of an agent being pivotal when both groups participate, when only her own group participates, and when only the other group participates.
\begin{equation}
    \pivtwo := \sum_{k=0}^\infty \tpmf(k) \totalnpmfag(k), \ 
    \pivone := \sum_{k=0}^\infty \tpmf(k) \ngpmfag(k), \ 
    \pivother:=  \sum_{k=0}^\infty \tpmf(k) \ngpmf(k) \label{Equation: pivotal probabilities}
\end{equation}

In collective action games, agents take a costly action when they believe  they can make a difference (are pivotal). While models with pivotality are ubiquitous in the voting literature, one common critique is that voters in large electorates are unlikely to ever be pivotal. But in our setting with uncertainty about the population size and the required threshold, $\pivtwo$, $\pivone$, and $\pivother$ are smooth functions of pivotal probabilities and reflect agents' beliefs about whether their participation can make a difference. 
So, $\pivtwo$, $\pivone$, and $\pivother$ serve as a way of modeling this incentive to avoid costly participation in any collective action game with strategic uncertainty.

We assume that  from any agent's perspective, the probability of being pivotal when only the other group participates is weakly smaller than the probability of being pivotal when only her own group participates or when both  groups participate.\footnote{This assumption is satisfied if, for example, $\popr_1,\popr_2$ are Poisson random variables, as in, for example,  \cite{myerson1998population}. In contrast, if $\popr_1,\popr_2$ are deterministic, say equal to $\pop$, then Assumption~\ref{Assumption: pivotal with one or two groups more likely than pivotal other} is violated if $Prob(\thresholdr = \pop + 1) > \max\{Prob(\thresholdr = \pop), Prob(\thresholdr = 2\pop\}$.  Assumption~\ref{Assumption: pivotal with one or two groups more likely than pivotal other} rules out cases in which an agent from a group that does not participate has the strongest incentive to participate,  knowing that no one in his group will participate.} We maintain the assumption below throughout the paper. 

\bass\label{Assumption: pivotal with one or two groups more likely than pivotal other}$\max\{\pivone,\pivtwo\} \ge \pivother$ \eass

\noindent
We first write down the conditions for a strategy profile to be an equilibrium.
\bprop
\label{prop:equilibrium}
A strategy profile $\strategy$ is an equilibrium if and only if for all $g \in \{1,2\}$ and for all $\signal \in \signalset$, 
\begin{align}
    \jcdf^1_\signal(P_{-g})\pivtwo +(1-\jcdf^1_\signal(P_{-g}))\pivone\ge \frac{c}{\mu(\signal)} & \text{ if } \signal \in P_g \tag{IC:P}\label{Equation: IC P}\\
    \jcdf^1_\signal(P_{-g}) \pivother \le \frac{c}{\mu(\signal)} & \text{ if } \signal \in NP_{g}. \tag{IC:NP}\label{Equation: IC NP}
\end{align}
\eprop

The intuition is straightforward.  
Consider an agent in group $g$ with signal $\signal \in P_g$. If the other group also receives a signal in its  participation set, which occurs with probability $\jcdf^1_\signal(P_{-g})$, then this agent can make a difference with probability $\pivtwo$. If the other group does not receive a signal in $P_{-g}$, then the agent can make a difference with probability $\pivone$. \eqref{Equation: IC P} simply says that the agent has an incentive to  incur the cost of participating if she believes she can make a difference with a sufficiently high probability. 
The logic behind \eqref{Equation: IC NP} is similar.

To capture how the incentives change with similarity of information, it is convenient to partition the model primitives into two environments.\footnote{In the knife-edge case with $\Lambda_b=\Lambda_o$, changing information similarity has no impact on participation.}

\begin{definition}\label{Definition: encouragement and discouragement}[Encouragement/Discouragement]
{\color{white}move to next line}
\begin{itemize}
\item 
We say we are in an \textbf{encouragement environment} if
\begin{equation}
    \enco.
    \tag{E}
    \label{enco}
\end{equation}
In this case, an agent is more likely to make a difference when both groups participate than if only her own group participates; that is, higher aggregate participation encourages participation.  
\item 

We say we are in a \textbf{discouragement environment} if 
\begin{equation}
    \disco.
    \tag{D}
    \label{disco}
\end{equation} 
In this case,  an agent is more likely to make a difference when only her group participates than if both groups participate; that is, higher participation by others discourages individual participation. 
\end{itemize}
\end{definition}
At first glance, encouragement and discouragement environments may seem to be environments of strategic complementarity and substitutability, respectively. This is not quite true. Discouragement environments do not feature strategic substitutability, because a nonparticipating agent in group $g$ (with  $x_g\in NP_g$) has a \emph{stronger} incentive to participate if the other group is more likely to participate. 

Theorems~\ref{Theorem: CAD helps in encouragement} and \ref{Theorem: CAD hurts in discouragement} establish that if we compare information similarity in the sense of CAD, then the above simple condition about primitives (whether $\pivtwo > \pivone$ or $\pivone > \pivtwo$) yields a complete characterization of when increased information similarity facilitates or  hinders participation.

\subsection{Encouragement Environment}\label{Section: Encouragement Environment}
\bthm\label{Theorem: CAD helps in encouragement}
In encouragement environments, the maximal equilibrium aggregate participation increases when information becomes more similar. That is,  
\newline
\[\jcdf^{\theta} \cad \widehat{\jcdf}^{\theta} \ \text{for all} \ \theta \implies  \protsize^*(\jcdf) \ge \protsize^*(\widehat \jcdf) \ \text{in} \ \eqref{enco}.  
\]
This is true regardless of how $\jcdf^{0}$ changes, as long as the marginals $\margdiststate$ are unaltered.
\ethm
The proof is in the appendix and proceeds in two steps. First, we show that in encouragement environments, the maximal participation equilibrium must be in symmetric strategies. Then we show that any maximal participation equilibrium remains an equilibrium when information similarity increases. Suppose that in the maximal equilibrium, in each group, an agent participates if and only if $\signal \in P$. If information similarity increases, an agent with $\signal\in P$ now assigns a higher probability that the other group also sees a signal $\signal\in P$ that induces them to participate. That is, $\jcdf^1_\signal(P)$ increases for $\signal\in P$. Since $\enco$, we can see from \eqref{Equation: IC P} that such an agent has an even stronger incentive to participate. Analogously, if information similarity increases, a nonparticipating agent with $\signal\in NP$ now assigns a lower probability that the other group sees $\signal\in P$. That is, $\jcdf^1_\signal(P)$ decreases for $\signal \in NP$. We can see from \eqref{Equation: IC NP} that a nonparticipating agent has an even weaker incentive to participate. 

\subsection{Discouragement Environment}\label{Section: Discouragement Environment}

Next, we analyze discouragement environments, 
in which $\disco$. The maximal equilibrium might no longer be symmetric. Further, a symmetric equilibrium under $\widehat \jcdf$ might no longer be an equilibrium under $\jcdf$ when $\jcdf \cad \widehat \jcdf$. This alone does not imply a smaller maximal equilibrium participation under $\jcdf$, because new equilibria may arise under $\jcdf$ that were not sustainable under $\widehat \jcdf$. Given any information structure~$\jcdf$ and a maximal equilibrium~$\strategy^*$, we define  a condition that describes why $\strategy^*$ is maximal under $\jcdf$.

\bdefn[Condition M]\label{Definition: maximality due to free-riding} Let $\strategy^*$ be a maximal equilibrium for $\jcdf$ with participation sets $(P_1^*,P_2^*)$. We say that $\jcdf$ satisfies condition M if, for any 
strategy profile~$\widehat \strategy$ with participation sets~$(\widehat P_1,\widehat P_2)$ 
 such that $\protsize(\widehat \strategy)> \protsize(\strategy^*)$, 

at least one of the following holds. 
\begin{enumerate}[(i)]
    \item[(M1)] $\exists \signal \in \widehat P_1 \cap \widehat P_2$ such that 
\begin{align*}
    \min_{i \in \{1,2\}} \left\{\jcdf^1_\signal(\widehat P_i) \pivtwo + (1-\jcdf^1_\signal(\widehat P_i)) \pivone\right\} < \frac{c}{\posterior(\signal)}.
\end{align*}
Or 
\item[(M2)] $\exists \signal \in (\widehat P_1 \cup \widehat P_2) \backslash (\widehat P_1 \cap \widehat P_2)$ such that
\begin{align*}
    \ind_{\signal \in \widehat P_1} \jcdf^1_\signal\left(\widehat P_1\right) + \ind_{\signal \in \widehat P_2} \jcdf^1_\signal\left(\widehat P_2\right)  > \frac{c}{\pivother \posterior(\signal)}.
\end{align*}
\end{enumerate}
\edefn
Condition M says there are two reasons why any strategy profile~$\widehat \strategy$ with a larger expected participation than the maximal equilibrium~$\strategy^*$ fails to be an equilibrium. 
Either \eqref{Equation: IC P} is violated for 
some signal that prescribes both groups to participate under 
$\widehat \strategy$, or  
\eqref{Equation: IC NP} is violated at a signal  at which exactly one group is prescribed to participate under $\widehat \strategy$. Below we establish that in discouragement environments, if the information structure satisfies condition~M, increasing information similarity can lead to lower maximal equilibrium aggregate participation.

Condition M is not a condition on the primitives of the model. However, a straightforward sufficient condition that guarantees condition M is as follows. Let $\signalset$ be ordered according to the posterior beliefs, $\posterior(\signal)$. Define $\widehat\signal:= \inf \{ \signal: \posterior(\signal) \pivone \ge \cost\}$. If $\widehat\strategy := \ind_{\sigr \ge \widehat \signal}$ is an equilibrium for some $\jcdf$, then $\jcdf$ satisfies condition M. The reason is that  regardless of the information structure, \eqref{Equation: IC P} can never be satisfied for $\signal < \widehat \signal$. Therefore, no equilibrium can have a larger expected participation than $\widehat\strategy$. And therefore condition M is satisfied.  
One easily verifies with a binary-signal example that this sufficient condition is not vacuous. Equipped with this condition, we now present our second main result, which  establishes how increased information similarity can reduce  maximal aggregate participation.

\bthm\label{Theorem: CAD hurts in discouragement} In a discouragement environment that satisfies condition M, the maximal equilibrium participation decreases when information becomes more similar. That is,  
\newline
\[\jcdf^\theta \cad \jcdfhat^\theta, \ \text{for all} \ \theta \implies \protsize^*(\jcdf) \le \protsize^*(\jcdfhat) \ \text{under} \ \eqref{disco} \ \text{if} \ \jcdfhat \ \text{satisfies condition} \ M.  
\]
Moreover, the inequality can be strict. The result is true regardless of how $\jcdf^{0}$ changes, as long as the marginals $\margdiststate$ are unaltered.
\ethm 
The proof is in the appendix. The argument involves two steps. First, we argue that the maximal equilibrium might no longer be an equilibrium when information becomes more similar. Let $\strategy^*$ with participation sets~$(P_1^*,P_2^*)$ be the maximal equilibrium under $\jcdfhat$. 
Consider a participating agent with a signal $\signal\in P_1^*\cap P_2^*$. If information becomes more similar, this agent assigns a higher probability to the event that the other group also receives a signal in their respective participation set. 
However, unlike in encouragement environments, this reduces her incentive to participate since $\disco$. As a result, this agent's \eqref{Equation: IC P} may be violated. Indeed, a nonparticipant's \eqref{Equation: IC NP} may also fail. Consider a signal $\signal \in P_1^* \backslash P_2^*$. 
An agent in group $2$ who receives such a signal is prescribed to not participate. However, with increased similarity of information, this agent assigns a higher probability that group $1$ will participate. 
This, in turn, makes her more likely to participate, which may violate \eqref{Equation: IC NP}. So the maximal equilibrium $\strategy^*$ under $\jcdfhat$  may no longer be an equilibrium under $\jcdf$. 

In the second step, we establish that no new equilibrium with larger expected participation arises under $\jcdf$. Suppose, for a contradiction, there is an equilibrium~$\strategy'$ with $\protsize(\sigma')> \protsize(\sigma^*)$. By the maximality of $\sigma^*$, we know that $\sigma'$ is not an equilibrium under $\jcdfhat$. By condition~M, two cases arise. In case (i), $\sigma'$ is not an equilibrium under $\jcdfhat$ because an agent's incentive to participate \eqref{Equation: IC P} is violated at some signal~$\signal\in P_1' \cap P_2'$, where both groups are prescribed to participate. With more similar information, \eqref{Equation: IC P} would continue to be violated. To see why, note that \eqref{Equation: IC P} is a convex combination of $\pivtwo$ and $\pivone$, and with more similar information, she assigns a higher weight to $\pivtwo$. Now if \eqref{Equation: IC P} was violated under $\jcdfhat$, then it will also be violated under $\jcdf$ because in discouragement environments, $\pivtwo\le \pivone$. In case (ii), $\sigma'$ is not an equilibrium under $\jcdfhat$ because \eqref{Equation: IC NP} is violated for some agent---say, from group $2$---with a signal in $(P_1'\backslash P_2')$. Such an agent wishes to participate under $\jcdfhat$ because she assigns a high probability to the event that group $1$ participates. 
Under $\jcdf$, when information is more similar, she has an even stronger incentive to participate (since she believes that the other group is more likely to participate and her own group is not going to participate). So, in both cases, if $\strategy'$ is not an equilibrium under $\jcdfhat$, then it cannot be an equilibrium under~$\jcdf$ either.

\noindent\textbf{Discussion of Theorems~\ref{Theorem: CAD helps in encouragement} and \ref{Theorem: CAD hurts in discouragement}:} A few observations are worth highlighting. First, the existing literature often restricts attention to monotone, cutoff strategies for the sake of tractability. 
In a game of pure complementarity \cite[see][]{morris2002social}, the best and worst equilibria are in cutoff strategies. However, with both complementarities and substitutabilities, 
this need no longer be true. We do not impose such a restriction on strategies; we allow nonmonotonic and asymmetric strategies. 
Also, we only assume a fixed finite signal space without imposing any additional restrictions on the signal structure. Consequently, there is little hope of characterizing all equilibria. However, using an indirect approach, we can characterize the effects of changing similarity of information on the set of equilibria and the maximal ones. Also note that we restrict attention to pure strategies. In encouragement environments, this is without loss: maximal equilibrium is in symmetric pure strategies. In discouragement environments, this need not be true. However, the qualitative insight that increased information similarity can lead to lower participation in discouragement environments remains valid even if the maximal equilibrium is in mixed strategies. See Section~\ref{Section: robustness} for a detailed discussion.

Second,  we started by partitioning the underlying primitives into the encouragement and discouragement environments  to capture when coordination and free-riding, respectively, are the primary hurdles to collective action. This partition was independent of any information structure. Theorems~\ref{Theorem: CAD helps in encouragement} and \ref{Theorem: CAD hurts in discouragement} imply that measuring information similarity using the notion of CAD yields the intuitive economic insight that increasing information similarity helps collective action exactly when coordination is the main challenge  and can hinder it  when free-riding is the main challenge.

This raises the natural question  whether our characterization can be obtained using an order of similarity that is less demanding (more complete) than CAD. Given information structures $Y$ and $\widehat Y$ with the same marginals, we say $Y\ccad \widehat Y$ if 
$$Prob(Y_2 \in A \vert Y_1 =x) \ge Prob(\widehat Y_2 \in A\vert \widehat Y_1 =x) $$
for all $x$ and all \emph{upper- and lower-contour sets $A$ containing $x$}.\footnote{A set $A$ is an upper-contour set if $A = \{ z' \in \signalset: z' \ge z\}$ for some $z \in \signalset$. A similar definition applies for a lower-contour set.} Intuitively, increased similarity no longer means that there is a higher chance of getting \emph{exactly the same} signal. Rather, $Y\ccad \widehat Y$ if $Y_1$, $Y_2$ are \emph{close to each other in value} with a higher probability relative to $\widehat Y_1, \widehat Y_2$.  We can obtain a result similar to Theorem~\ref{Theorem: CAD helps in encouragement} using this weaker order (under a mild regularity condition on the information structure). 
The maximal equilibrium then has a cutoff structure in encouragement environments:  agents participate if they receive a signal above a threshold. An increase in similarity even in this weaker order still implies an increase in participation in encouragement environments. However, in discouragement environments, maximal equilibria might not be in cutoff strategies, and, in general, an analogous result to Theorem~\ref{Theorem: CAD hurts in discouragement} does not hold. The extant literature often allows only cutoff strategies---for example, \cite{shadmehr2018protest}. If we consider only cutoff strategies, then Theorem~\ref{Theorem: CAD hurts in discouragement} holds with the weaker order. In this sense, CAD, even though demanding, enables a characterization of the basic trade-off in the paper without imposing restrictions on equilibrium strategies.  

Similarly, one might aim to characterize changing information similarity using more familiar orders like the supermodular order or the PQD order \cite[see][for details]{meyer2012increasing}. In two dimensions, the CAD order is stronger than the supermodular order, which is equivalent to the PQD order. As previously mentioned, encouragement and discouragement environments divide the game into scenarios where strategic complementarities or strategic substitutabilities prevail. It is straightforward to create examples where increasing similarity in the supermodular order (and hence the PQD order) results in lower participation in encouragement environments. Thus, the CAD order provides a characterization that aligns with our initial intuition---that increasing similarity aids coordination and exacerbates free-riding---something that orders like the supermodular order do not seem to provide.

\subsection{Encouragement, Discouragement, and Resilience}

Roughly speaking, encouragement and discouragement environments are the general analogs of the parameter~$q$ in our  example in the introduction. They quantify the relative importance of coordination (versus preventing free-riding). Recall that the definitions of these environments (Definition~\ref{Definition: encouragement and discouragement}), while based on primitives, were in terms of the population uncertainty and the threshold uncertainty, making them somewhat abstract. Below we provide an alternate characterization that shows that intuitively, a higher threshold for regime change makes an encouragement and a lower threshold makes a discouragement environment.  

We say a regime with  participation threshold~$\thresholdr'$ is more resilient than one with threshold~$\thresholdr$, denoted by $\thresholdr'\succcurlyeq_{st} \thresholdr$, if $\thresholdr'$ first-order stochastically dominates $\thresholdr$.

\begin{proposition}\label{Proposition: resilience and environments} 
Suppose that a single-crossing assumption holds---that is, for all $\thresholdr'\succcurlyeq_{st} \thresholdr$, if $(\pivtwo-\pivone)(\thresholdr)\ge0$ then $(\pivtwo - \pivone)(\thresholdr')\geq 0$. 
Define $\thresholdr$ such that $(\Lambda_2-\Lambda_1)(\thresholdr)=0$.
\begin{enumerate}
\item For any $\thresholdr'$ with $\thresholdr'\succcurlyeq_{st} \thresholdr$, we are in an encouragement environment. 
\item For any $\thresholdr'$ with $\thresholdr\succcurlyeq_{st} \thresholdr'$, we are in a discouragement environment.
\end{enumerate}
\end{proposition}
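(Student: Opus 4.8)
The plan is to reduce the entire classification to the sign of the scalar $D(\thresholdr) := \pivtwo - \pivone$, regarded as a function of the resilience distribution, and then to read off encouragement versus discouragement directly from the single-crossing hypothesis. Using the definitions in \eqref{Equation: pivotal probabilities}, I would first write $D$ as a linear functional of the threshold pmf:
\[
D(\thresholdr) = \pivtwo - \pivone = \sum_{k=0}^{\infty} \tpmf(k)\,\bigl[\totalnpmfag(k) - \ngpmfag(k)\bigr].
\]
This representation is the workhorse: changing the resilience (in particular, moving it up or down in the FOSD order) only reweights the fixed kernel $h(k) := \totalnpmfag(k) - \ngpmfag(k)$, so the sign of $D$ is exactly what separates \eqref{enco} from \eqref{disco}. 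I would also remark that, because $\popr_{-g}\ge 0$, the distribution $\totalnpmfag$ stochastically dominates $\ngpmfag$, so $h$ is negative for small $k$ and positive for large $k$; this is the intuition behind why resilience matters. But since $h$ is not monotone, monotonicity of $D$ in the FOSD order is not automatic, which is precisely why single crossing is imposed as a hypothesis rather than derived.

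For Part 1 (encouragement above the crossing) the argument is immediate. By definition of the crossing threshold, $D(\thresholdr) = 0 \ge 0$. The single-crossing hypothesis says exactly that once $D$ is nonnegative at some resilience level it stays nonnegative at every FOSD-larger level; applying it to the pair $\thresholdr \preceq_{st} \thresholdr'$ yields $D(\thresholdr') \ge 0$, i.e.\ $\pivtwo \ge \pivone$, which is condition \eqref{enco}. Hence every $\thresholdr' \succcurlyeq_{st} \thresholdr$ is an encouragement environment.

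Part 2 (discouragement below the crossing) is the step I expect to be the main obstacle, because the hypothesis as stated only propagates nonnegativity \emph{upward}; it does not, on its own, propagate negativity downward across a point where $D$ equals (rather than strictly exceeds) zero. The route I would take is to use single crossing in its contrapositive form: whenever one resilience level FOSD-dominates another and $D$ is strictly negative at the larger level, it must be strictly negative at the smaller one as well. Hence the set of resilience levels at which $D<0$ is downward closed in the FOSD order. Reading the single-crossing hypothesis as its name suggests---that along any FOSD chain $D$ changes sign at most once, from negative to nonnegative---the crossing threshold $\thresholdr$ sits at the boundary of this down-set, so every resilience level strictly FOSD-below $\thresholdr$ lies in it, giving $\pivone > \pivtwo$, i.e.\ \eqref{disco}. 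The one delicate point to handle carefully is a knife-edge where $D=0$ at some level other than $\thresholdr$; such a boundary case carries no payoff consequence (it is the knife-edge noted after Definition~\ref{Definition: encouragement and discouragement}) and can be absorbed into either environment without affecting any conclusion.

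Finally, I would close by noting that the linear representation above is what ties the abstract pivotality primitives to \emph{resilience}: a more resilient regime is a FOSD-larger $\thresholdr$, which---through single crossing---pushes $D$ to the encouragement side, matching the interpretation that hard-to-topple regimes are those where coordination, not free-riding, is the binding constraint.
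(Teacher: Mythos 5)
Your Part 1 is exactly the intended argument, and in fact the paper offers no separate proof of this proposition at all---it is treated as an immediate consequence of the single-crossing hypothesis, so there is nothing more to match there. The substantive issue is Part 2, and you have correctly located it but not actually closed it. Writing $D(\thresholdr):=(\pivtwo-\pivone)(\thresholdr)$, the stated hypothesis only says the set $\{D\ge 0\}$ is upward closed in $\succcurlyeq_{st}$; its contrapositive, which is all your ``downward-closed $\{D<0\}$'' observation amounts to, gives you nothing at a resilience level $\thresholdr'\preccurlyeq_{st}\thresholdr$ unless you already know $D<0$ somewhere weakly above $\thresholdr'$ --- and at the crossing point itself $D=0$, not $D<0$. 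A configuration with $D(\thresholdr')>0$ below the crossing point and $D(\thresholdr)=0$ at it is fully consistent with the literal hypothesis, so appealing to ``what the name suggests'' is doing real work that the written assumption does not. The clean repair is to invoke the other half of the standard single-crossing property, namely the strict implication: for $\thresholdr\preccurlyeq_{st}\thresholdr'$, $D(\thresholdr)>0\Rightarrow D(\thresholdr')>0$. Its contrapositive applied with the crossing point in the role of the \emph{larger} element (where $D=0\le 0$) yields $D(\thresholdr'')\le 0$ for every $\thresholdr''\preccurlyeq_{st}\thresholdr$, i.e.\ \eqref{disco} up to the knife-edge $\pivtwo=\pivone$, which the paper explicitly sets aside. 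You should state that strengthened form of the hypothesis explicitly rather than leave it implicit.

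Two smaller points. First, your side remark that FOSD of the agent's belief about $\popr_g-1+\popr_{-g}$ over her belief about $\popr_g-1$ makes the kernel $h(k)$ ``negative for small $k$ and positive for large $k$'' is not correct as stated: FOSD only signs the partial sums of $h$, not $h$ itself, so the pointwise sign pattern of the difference of pmfs need not be single-crossing in $k$. Since you flag this as intuition and do not use it, it is harmless, but it should be deleted or weakened. Second, the linear-functional representation of $D$ in the threshold pmf is a nice framing but is not needed for either part; where that structure actually earns its keep is in verifying the single-crossing hypothesis in concrete cases (the paper does this for Poisson group sizes with a deterministic threshold, obtaining the cutoff $\threshold>\pop/\ln 2$), not in deriving the proposition from it.
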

The single-crossing assumption means that if an agent is more likely to make a difference when both groups join than when only her group joins, this continues to be true when regime change becomes more difficult---that is, if $\pivtwo>\pivone$ for a given  regime, then $\pivtwo>\pivone$ also for more resilient regimes. 
The assumption is satisfied, for example, if the group sizes~$\popr_1,\popr_2$ are drawn from a Poisson distribution and the resilience $\thresholdr$ is deterministic. In this case, 
$\enco$ if and only if $\threshold> n^*=\frac{\pop}{\ln 2}$. Figure~\ref{fig:enco and disco poisson} illustrates encouragement and discouragement environments for the Poisson example. 

\begin{figure}[h!]
\bigskip
\centering
    \begin{subfigure}[b]{0.45\textwidth}
        \centering
        \begin{tikzpicture}
            \begin{axis}[
                width=8cm, height=6cm,
                no markers,
                axis lines=middle,
                xlabel={$x$},
                ylabel={$\ppdf(\bar n, x)$},
                xlabel style={at={(axis description cs:1,0)}, anchor=north},
                ylabel style={at={(axis description cs:0.15,.75)}, anchor=south},
                xtick=\empty,
                ytick=\empty,
                axis line style={-latex},
                every axis label/.append style={font=\large},
                domain=0:20,
                samples=100,
                ymin=-0.025,
                ymax=0.2
            ]
                \addplot[ultra thick, blue] {exp(-x) * x^12 / factorial(12)};
            \addplot[only marks, mark=*, mark options={scale=1.2, fill=black}] coordinates {(7.5, 0.03657544193551429)};
            
            \node at (axis cs:7.5,-0.015) {$N$}; 
            
            \addplot[dashed, thick, black] coordinates {(7.5, 0.03657544193551429) (7.5,0)};

            \addplot[only marks, mark=*, mark options={scale=1.2, fill=black}] coordinates {(12, 0.11436791550944653)};
            
            \node at (axis cs:12,-0.015) {$\threshold$}; 
            
            \addplot[dashed, thick, black] coordinates {(12, 0.11436791550944653) (12,0)};

            \addplot[only marks, mark=*, mark options={scale=1.2, fill=black}] coordinates {(15, 0.08285923436864542)};
            
            \node at (axis cs:15,-0.015) {$2N$}; 
            
            \addplot[dashed, thick, black] coordinates {(15, 0.08285923436864542) (15,0)};
            \end{axis}
        \end{tikzpicture}
        \caption{Encouragement (high $\threshold$)}
        \label{fig:encouragement}
    \end{subfigure}
    \hfill
    \begin{subfigure}[b]{0.45\textwidth}
        \centering
        \begin{tikzpicture}
            \begin{axis}[
                width=8cm, height=6cm,
                no markers,
                axis lines=middle,
                xlabel={$x$},
                ylabel={$\ppdf(\bar n, x)$},
                xlabel style={at={(axis description cs:1,0)}, anchor=north},
                ylabel style={at={(axis description cs:0.15,.75)}, anchor=south},
                xtick=\empty,
                ytick=\empty,
                axis line style={-latex},
                every axis label/.append style={font=\large},
                domain=0:20,
                samples=100,
                ymin=-0.025,
                ymax=0.2
            ]
                \addplot[ultra thick, blue] {exp(-x) * x^7 / factorial(7)};

                \addplot[only marks, mark=*, mark options={scale=1.2, fill=black}] coordinates {(5.5, 0.12344930223720431)};
            
            \node at (axis cs:5.5,-0.015) {$N$}; 
            
            \addplot[dashed, thick, black] coordinates {(5.5,0.12344930223720431) (5.5,0)};

            \addplot[only marks, mark=*, mark options={scale=1.2, fill=black}] coordinates {(7, 0.1490027796743379)};
            
            \node at (axis cs:7,-0.015) {$\threshold$}; 
            
            \addplot[dashed, thick, black] coordinates {(7, 0.1490027796743379) (7,0)};

            \addplot[only marks, mark=*, mark options={scale=1.2, fill=black}] coordinates {(11, 0.06457716255760958)};
            
            \node at (axis cs:11,-0.015) {$2N$}; 
            
            \addplot[dashed, thick, black] coordinates {(11,0.06457716255760958) (11,0)};
            \end{axis}
        \end{tikzpicture}
        \caption{Discouragement (low $\threshold$)}
        \label{fig:discouragement}
    \end{subfigure}
    \caption{\small{Encouragement and discouragement environments. Here, $\ppdf(\threshold,x) = \frac{e^{-x} x^{\threshold}}{\threshold!}$ is the Poisson pmf at $\threshold$ with mean $x$. The peak of these pmfs occurs at $\threshold.$ In both the environments, $\pop < \threshold < 2\pop$. However, in encouragement environments, $\threshold$ is much larger than $\pop$. Therefore, $\pivone = \ppdf(\threshold,\pop) < \ppdf(\threshold,2\pop) = \pivtwo$ in encouragement environments (left), while the reverse holds in discouragement environments (right).}}
    \label{fig:enco and disco poisson}
\bigskip
\end{figure}

In words, the proposition states that sufficiently resilient regimes are encouragement environments, and weak regimes are discouragement environments. This result implies that the effect of changing information similarity is qualitatively different based on the resilience of regimes: Increased similarity of information facilitates greater participation against regimes that are hard to change  but can hurt participation against weaker regimes. For instance, consider mass protests. The popular press often claims that mass protests have becomes larger in modern times and attributes this to the change in modern communication technology.\footnote{See, for instance, \cite{economist2020}.}
Our result suggests that this casual observation may be biased: While popular media has highlighted how a more connected world enabled large protests against regimes previously thought to be impregnable, they may have overlooked how this same increased information similarity may have hindered collective action in movements with ex ante easier goals.

\section{Application~1: Success of Mass Protests}\label{sec:resilience of regimes}

We can apply our framework directly to study mass protests. Theorems~\ref{Theorem: CAD helps in encouragement} and~\ref{Theorem: CAD hurts in discouragement} characterize the effect of changes in information similarity on aggregate participation in a mass protest but are silent about the effect on the 
likelihood of a successful protest. 
A priori, increased participation might not imply an increased probability of a successful regime change. 
To see this quickly, suppose that $\overline \strategy$ is the equilibrium in which the probability of successful protest when $\stater=1$ is maximized. For simplicity, restrict attention to symmetric strategies across groups, and let $\overline P$ and $\overline{NP}$ denote respectively the participation and nonparticipation sets. If information becomes more similar, the event $\{\sigr_1 \in \overline P, \sigr_2 \in \overline P\}$ occurs more frequently when $\stater =1$. This increases the probability of a successful protest. However, the probability of the event $\{\sigr_1\in \overline{NP},\sigr_2\in \overline{NP}\}$ also increases, which means it is also more likely that neither group participates. In other words, increased similarity of information increases the probability of the event that both groups participate and the event that neither group participates. The overall effect on the probability of a successful protest is ambiguous. 

To study the effect of increased information similarity on the probability of success of a protest, we specialize to an environment in which the population sizes $\popr_1,\popr_2$ are Poisson distributed with mean $\pop$ and the threshold participation required for regime change is a constant $\overline n$.
We also consider symmetric strategies across the groups.\footnote{In encouragement environments, it is easy to argue that the equilibrium with maximal probability of success is in symmetric strategies. Therefore, we can obtain a similar result as Proposition~\ref{Proposition: probability of success} without restricting to symmetric strategies.} Results analogous to Theorems~\ref{Theorem: CAD helps in encouragement} and \ref{Theorem: CAD hurts in discouragement} about the expected maximal participation still  hold.\footnote{We further show (in Proposition~\ref{Proposition: expected protsize conditional on there being protests} 
in Appendix~\ref{Subsection: appendix size of protests conditional on protests}) that if a strategy profile constitutes a maximal equilibrium under two information structures ranked according to CAD, then \emph{conditional on there being any participation}, the maximal equilibrium participation and the probability of a regime change are strictly higher under the more similar information structure. However, the \emph{unconditional} size of mass protests or probability of successful protests might not  increase with more similar information.} Furthermore, Proposition~\ref{Proposition: probability of success} below shows that increasing similarity can never lower the probability of a successful protest in encouragement environments but can do so in discouragement environments. 

For a given information structure $\jcdf$ and equilibrium strategy profile~$\sigma$ we define
\begin{align*}
    \probsucc(\strategy;\jcdf) :=& \jcdf(\Actionr \ge \threshold+1\vert \stater =1)\\
    \overline \probsucc(\jcdf):=& \max_{\strategy \in \eq(\jcdf)}\probsucc(\strategy).
\end{align*}

\bdefn[Condition $M^\prime$ ]\label{Definition: maximality due to free-riding (probability)} Fix an information structure $\jcdf$. Let $\overline \strategy$ be an equilibrium maximum probability of success ($\bar \Pi(\jcdf) = \Pi(\overline\strategy;\jcdf)$). We say that $\jcdf$ satisfies condition $M^\prime$ if, for any 
strategy profile~$\widehat \strategy$ with $\Pi(\widehat\strategy;\jcdf) > \bar\Pi(\jcdf)$, $\exists \signal \in \widehat P:= P(\widehat\strategy)$, 
such that 
\begin{align*}
 \jcdf^1_\signal(\widehat P) \pivtwo + (1-\jcdf^1_\signal(\widehat P)) \pivone< \frac{c}{\posterior(\signal)}.
\end{align*}
\edefn

\bprop 
\label{Proposition: probability of success} The maximum probability of successful protest increases in encouragement environments, while it decreases in discouragement environments provided it satisfies Condition~$M^\prime$. 
\begin{enumerate}
\item If $\overline n>n^*$ and $\jcdf^1 \cad \jcdfhat^1$, then $\overline \probsucc(\jcdf) \ge \overline\probsucc(\jcdfhat)$. 
\item If $\overline n<n^*$ and $\jcdfhat$ satisfies Condition~$M^\prime$, then $\overline\probsucc(\jcdf) \le \overline\probsucc(\jcdfhat)$. 
\end{enumerate}   
\eprop

\noindent An analogous result holds for welfare (see Appendix~\ref{Appendix: probability of success}).  These findings have important implications for press freedom and authoritarian governments.  \cite{edmond2013information} analyzes a setting in which 
citizens protest and an authoritarian government manipulates the information that citizens see. While the author focuses on the coordination aspect, our setup also considers the free-riding incentive. In today's world, even when an authoritarian government can stop the mainstream media from reporting about unfavorable policies, it is unlikely to prevent people from learning about them. Therefore, the main effect of curbing press freedom is that individuals are less likely to believe that others have seen the same information, or, in other words, information similarity across agents is reduced. Our theory suggests that it is the powerful (that is,    resilient) authoritarian governments that benefit more from curbing press freedom. Less powerful regimes may face larger protests and a higher likelihood of being overthrown when attempting to restrict press freedom.

\section{Application~2: Costly Voting in Committees} 
\label{Section: voting}
Next, we apply our framework to a quite different context: voting in committees \cite[\`a la][]{palfrey1985voter}. Consider members of a monetary policy committee who must vote on whether to raise interest rates. Voting in favor of an interest rate increase is costly for committee members because votes are public and the public wants the status quo to be maintained.  Committee members base their decision on their private information about how a rate increase  affects the economy. An interest rate increase is implemented if and only if a threshold number of members vote in favor of it. In this setting, members' private information can exhibit different levels of similarity of information depending on the diversity of their backgrounds---for example, different academic expertise may lead members to focus on different aspects of the available empirical evidence. What effect do changes in diversity of the committee have on voting outcomes? We can apply our tools to characterize when increased information similarity (less diversity among committee members) strengthens or weakens the incentive to vote in favor of a rate increase based on evidence, and how it affects the choice of the optimal voting threshold. 

Suppose the committee comprises $G>2$ members who must vote for or against a rate increase. 
The economy is in one of two possible states: a rate increase is either unnecessary ($\stater=0$) or warranted ($\stater=1$). Each member $i$ privately receives a noisy binary signal $\sigr_i \in \signalset = \{0,1\}$ about the state $\stater$. Conditional on state $\stater$, the signals are drawn from a joint distribution $\jcdf^\state$. These signals can be interpreted as each committee member's understanding of the available data given their background  and area of expertise. Information among committee members is more similar if their backgrounds are more similar. The rate increase is implemented if and only if more than $\threshold+1$ members cast votes in favor of the rate increase. Intuitively, we say each individual member vote carries little (a lot of) weight if $\threshold$ is high (low). If a rate increase is implemented, the members get a payoff of $\theta$. Voting in favor of a rate increase costs $c$, interpreted as the cost of facing public hostility ex post. For simplicity, we assume $\frac{\cost}{\posterior(0)} > 1$ to guarantee that in any equilibrium, a member who believes that $\sigr_i=0$ never votes in favor of a hike. We also restrict attention to symmetric strategies.  

Notice that now there is no uncertainty about the number of participants or the participation threshold. Further, importantly, with $G>2$ agents, we need a notion of information similarity to compare random variables with $G>2$ dimensions. We extend our notion of CAD as follows. Let $I=\sum_{j\in G}\mathbbm 1_{\sigr_j=1}$ denote the number of committee members who receive signal~$\sigr_j=1$, and let $I_{-i}=\sum_{j\in G\setminus \{i\}}\mathbbm 1_{\sigr_j=1}$ denote the number of members other than $i$ who receive signal~$\sigr_j=1$. 
\begin{definition}\label{Definition: CAD for n dimensions, voting application}
We say $\jcdf^1\cad \jcdfhat^1$ if there exists $k^*\in\{0,1, \ldots, G-2\}$ such that 
$$\gamma^1_1 (k) \le \widehat \gamma^1_1 (k) \ \text{for all} \ k \le k^*$$
$$\text{and} \ \gamma^1_1 (k) \ge \widehat \gamma^1_1 (k) \ \text{for all} \ k > k^*.$$
We call $k^*$ the index of sign change between $\jcdf^1$ and $\jcdfhat^1$.
\end{definition}
To understand the above definition, consider a member who receives a signal $1$. If information similarity increases, then conditional on the rate increase being warranted ($\stater=1$), each member assigns a higher probability to more than $k^*$ others having also observed signal~$1$ 
and a lower probability to fewer than $k^*$ others having observed the opposite signal~$0$.\footnote{Notice that we  do not restrict the conditional beliefs after $\sigr_i=0$. This is because, given $c>\mu(0)$, an agent never casts votes for a rate hike after observing $\sigr_i=0$. For more general environments, we would need restrictions on the conditional beliefs after any signal realization. In a companion paper, we present  extensions of the CAD order for more than two 
dimensions and study their implications for equilibrium behavior in a class of binary-action games.}

Consider a strategy profile, denoted by $\strategy^1$, in which members vote in favor of a rate increase whenever they receive $\sigr_i=1$. 
For $\strategy^1$ to be an equilibrium, we need  
\begin{equation}
\label{ICvoting}
   \gamma^1_1(\threshold)\geq \frac{c}{\mu(1)} 
   \tag{IC-voting},
\end{equation}
where $\gamma^1_1$ is the probability distribution function of  $I_{-i}$ conditional on $\stater=1$ and $\sigr_i=1$, and $\mu(1)$ is the posterior belief upon receiving signal~$\sigr_i=1$. Proposition~\ref{Proposition: voting changes in similarity like theorem 1 and 2}  characterizes how increased similarity of information affects the maximal equilibrium (the equilibrium with the largest number of votes in favor of a rate hike conditional on it being warranted).
The only candidate equilibrium with any votes for a rate hike is $\strategy^1$. Therefore, studying maximal equilibria 
reduces to starting out with a $\strategy^1$ that is an equilibrium and checking whether it remains an equilibrium when information similarity increases. 

\bprop\label{Proposition: voting changes in similarity like theorem 1 and 2}Fix $\threshold$, the threshold~number of votes required to implement a rate increase. Suppose  $\jcdf^1 \cad \jcdfhat^1$, and let $k^*$ be the associated index of sign change between $\jcdf^1$ and $\jcdfhat^1$. Suppose $\strategy^1\in \eq(\jcdfhat)$. 
\begin{enumerate}
\item If $k^* < \threshold$, then $\strategy^1 \in \eq(\jcdf)$.
\item If $k^* \ge \threshold$, then it  is possible that $\strategy^1\notin \eq(\jcdf)$.
\end{enumerate}
\eprop
Let us analyze the incentives of a committee member $i$ who believes that a rate increase is warranted based on her signal~$\sigr_i = 1$. We consider two extreme cases. First, suppose that individual member votes have as little weight as possible---that is, a unanimous vote in favor is required to increase rates (that is, $\threshold +1= \ngroups)$. Then member~$i$'s vote is relevant only when all the others also vote for a hike. This requires that all others also have received the same signal. With increased information similarity, $\gamma^1_1(\ngroups-1)$ increases, which makes it more likely that her vote is relevant. This increases her incentive to vote in favor of a rate increase conditional on the signal~$\sigr_i = 1$, regardless of $k^*$. 
Next, suppose that individual member votes have the highest possible weight---that is, just one vote is required to implement the rate increase ($\threshold +1= 1$). Then member~$i$'s vote is relevant only if all the others vote against the interest rate hike. Increased information similarity means $\gamma^1_1(0)$ decreases. The incentive to vote for a hike in this case is diminished, regardless of $k^*$. In general, how increased information similarity affects a member's incentive to vote will depend on $k^*$. Intuitively, conditional on the rate increase being warranted, a more diverse committee increases (reduces) the maximal equilibrium number of votes in favor of a rate increase if each individual vote has sufficiently high (low) weight.

This leads to an interesting normative question about the design of optimal voting rules in committees: how do changes in the diversity of a committee affect the choice of the optimal voting threshold rule?\footnote{In a recent paper, \cite{kattwinkel2023} study the optimal decision mechanism for juries, allowing for general mechanisms but keeping the information structure fixed with independent signals across jurors. Our framework considers a special class of decision mechanisms, namely those in which a minimum threshold number of votes is required for a decision, and asks how the optimal rule changes with a changing information structure.} The interested reader can refer to Appendix~\ref{Section: appendix optimal voting rule} for details. 

\section{Discussion}\label{Section: robustness}
Finally, we discuss some extensions. Formal results are in the online appendix.

\paragraph{Mixed strategies:} So far we have restricted our attention to pure strategies. This is without loss in encouragement environments because equilibria with maximal participation are necessarily pure and symmetric. However, in discouragement environments, maximal equilibria may be in mixed strategies. 
The main technical challenge of working with mixed strategies is that 
the maximal equilibrium can involve mixing on some signals with different probabilities, and it is no longer possible to partition $\signalset$ into participation and nonparticipation sets~$P$ and $NP$. Much of our analysis examined how $\jcdf^1_\signal(P)$ changes for any $\signal$. But when agents mix with different probabilities after different signals, we need to understand how $\jcdf^1_\signal(\signal')$ changes for each $\signal, \signal'\in \signalset$. Developing tools to completely characterize how the set of mixed-strategy equilibria varies with information similarity is left for future work. However, below, we show that even when we allow for mixing, the main qualitative insight that increasing information similarity can lead to lower maximal equilibrium participation in discouragement environments remains valid. 

For the sake of exposition, suppose  $\popr_1,\popr_2$ are Poisson distributed and $\thresholdr$ is a constant, and suppose  the maximal equilibrium is symmetric and exhibits mixing on exactly one signal. That is, $\signalset$ can be partitioned into three sets: $P,NP$, and $\{\signal^*\}$. $P$ and $NP$ have their usual meanings, while on $\signal^*$ agents mix---that is, participate with probability $\b < 1$. Then, the IC for participation at $\signal^*$ is
\begin{align}
    \jcdfhat^1_{\signal^*}(\signal^*) \ppdf(\threshold, 2\b \pop) + \jcdfhat^1_{\signal^*} (P) \ppdf(\threshold,(1+\b)\pop) +  \jcdfhat^1_{\signal^*} (NP) \ppdf(\threshold,\b\pop) = \frac{\cost}{\posterior(\signal^*)}.\tag{IC-P-Mix}\label{Equation:IC-P-Mix}
\end{align}

Recall that $\ppdf(k,x) = \frac{e^{-x}x^k}{k!}$ is the Poisson probability at $k$ with mean $x$. Suppose that information becomes more similar. How this affects the maximal equilibrium depends on both the type of change in the information structure and how the three pivotal probabilities in \eqref{Equation:IC-P-Mix}---$\ppdf(\threshold,2\b\pop),\ppdf(\threshold,(1+\b)\pop)$, and $\ppdf(\threshold,\b \pop)$---are ranked.

\begin{figure}[h!]
    \centering
    \begin{tikzpicture}
        \begin{axis}[
            width=14cm, height=8cm,
            no markers,
            axis lines=middle,
            xlabel={$x$},
            ylabel={$\ppdf(\threshold, x)$},
            xlabel style={at={(axis description cs:1,0)}, anchor=north},
            ylabel style={at={(axis description cs:-0.07,.5)}, anchor=south},
            xtick=\empty,
            ytick=\empty,
            axis line style={-latex},
            legend style={at={(1,1)}, anchor=north east, draw=none},
            every axis label/.append style={font=\large},
            domain=4.5:18,
            samples=100,
            ymin=-0.013,  
            ymax = 0.14
        ]
            \addplot[ultra thick, blue] {exp(-x) * x^10 / factorial(10)};
            
            \addplot[only marks, mark=*, mark options={scale=1.2, fill=black}] coordinates {(11,0.1193780602280255)};
            \node at (axis cs:11,-0.007) {$\b N$}; 
            
            \addplot[dashed, thick, black] coordinates {(11, 0.1193780602280255) (11, 0)};
            
            \addplot[only marks, mark=*, mark options={scale=1.2, fill=black}] coordinates {(13,0.08587015077390671)};
            
            \node at (axis cs:13,-0.007) {$2\b N$}; 
            
            \addplot[dashed, thick, black] coordinates {(13,0.08587015077390671) (13,0)};
            
            \addplot[only marks, mark=*, mark options={scale=1.2, fill=black}] coordinates {(15,0.048610750829605316)};

            \node at (axis cs:15,-0.007) {$(1+\b) N$}; 
            
            \addplot[dashed, thick, black] coordinates {(15,0.048610750829605316) (15,0)};

            \addplot[only marks, mark=*, mark options={scale=1.2, fill=black}] coordinates {(10,0.1251100357211333)};

            \node at (axis cs:10,-0.007) {$\threshold$}; 
            
            \addplot[dashed, thick, black] coordinates {(10,0.1251100357211333) (10,0)};
            
        \end{axis}
    \end{tikzpicture}
    \caption{Pivotal probabilities}
    \label{fig:poisson-probability}
\end{figure}

To see this, suppose that $\ppdf(\threshold,\b\pop) > \ppdf(\threshold,2\b\pop)$ (as in Figure~\ref{fig:poisson-probability}) and~$\jcdf^1 \cad \jcdfhat^1$. In particular, suppose that $\jcdf^1_{\signal^*}(P ) = \jcdfhat^1_{\signal^*}(P)$, while $\jcdf^1_{\signal^*}(NP) < \jcdfhat^1_{\signal^*}(NP)$. Then, $\jcdf^1_{\signal^*}$ assigns more weight on $\ppdf(\threshold, 2\b\pop)$ and less on $\ppdf(\threshold,\b\pop)$. Therefore, the LHS of \eqref{Equation:IC-P-Mix} will fall under $\jcdf$, violating the constraint. More importantly, as Figure~\ref{fig:poisson-probability} shows, \eqref{Equation:IC-P-Mix} cannot hold for any higher $\b$ under $\jcdf$.\footnote{It is not necessary that $\b \pop > \threshold$ as in Figure~\ref{fig:poisson-probability} for \eqref{Equation:IC-P-Mix} to not be satisfied for any higher $\b$. If $2\ppdf_2(\threshold,2\b\pop) 
 + \ppdf_1(\threshold,\b\pop) < 0$, then \eqref{Equation:IC-P-Mix} cannot be satisfied for a higher $\b$ under $\jcdfhat$. Here $\ppdf_1(k,x) = \frac{\partial \ppdf(k,x)}{\partial k}$ and $\ppdf_2(k,x) = \frac{\partial \ppdf(k,x)}{\partial x}$.} Therefore, an equilibrium in which players participate after observing signals in $P$ and mix on observing $\signal^*$ must exhibit lower participation under $\jcdf$. Essentially, a player with a signal $\signal^*$ anticipates higher participation under $\jcdf$. In discouragement environments, this weakens his incentive to participate. 

In contrast, suppose that $\jcdf^1_{\signal^*}(NP) = \jcdfhat^1_{\signal^*}(NP)$ and $\jcdf^1_{\signal^*}(P) < \jcdfhat^1_{\signal^*}(P)$. In this case, we assign more weight on $\ppdf(\threshold,2\b\pop)$ and less on $\ppdf(\threshold, (1+\b)\pop)$. Therefore, the LHS of \eqref{Equation:IC-P-Mix} \emph{increases} under $\jcdf$. More importantly, if both $\ppdf(\threshold,2\b\pop)$ and $\ppdf(\threshold,(1+\b)\pop)$ are decreasing in $\b$ (as in Figure~\ref{fig:poisson-probability}), then \eqref{Equation:IC-P-Mix} will hold with an equality for a higher $\b$. Therefore, maximal participation will \emph{increase} provided other equilibrium conditions continue to hold, as will be the case if the change in information similarity is small. Unlike in the previous case, here an agent with a signal $\signal^*$ anticipates lower participation by others under $\jcdf$. In discouragement environments, this increases his incentive to participate. 

The key insight is that the impact of information similarity on incentives is fundamentally linked to players' beliefs about expected participation based on their signals. In discouragement environments, higher participation reduces a player's likelihood of being pivotal, weakening her incentive to participate. When players use mixed strategies, they adjust to the increased expected participation---and the resulting reduced incentive to participate---by lowering their own mixing probability. Thus, whether increased similarity aids or hinders participation in mixed-strategy equilibria depends on how it affects expected participation by others. This insight also applies when agents mix over multiple signals.

\paragraph{State-dependent changes in similarity:} There is a public discourse about how undesirable autocratic regimes restrict the flow of information among citizens, making coordination harder. We can ask how participation in a mass protest changes if the similarity of information is state dependent:  information similarity is not allowed to increase in one state, say, when the regime is autocratic, but increases in the other. 
Our baseline model cannot directly answer this question. We interpret $\stater= 1$ ($\stater=0$) as the feasibility (infeasibility) of collective action, and so changes in information similarity in state $0$ do not affect agents' incentives.  
But our methods can be readily applied to study settings with state-dependent changes by simply relabeling the states---say,  $\states = \{-1,1\}$---so that a regime change is welfare improving in state~$\stater = 1$  and  welfare reducing in  $\stater=-1$. Proposition~\ref{Proposition: general state space} in the online appendix shows that we can derive qualitatively different results in this case: If information similarity increases only in good regimes (when regime change is welfare reducing) and is unchanged in  autocratic regimes (when regime change is welfare improving), then expected participation falls in encouragement environments and increases in discouragement environments.

\paragraph{More than two groups:}  With two groups, comparison in the CAD order allowed for a complete characterization of when more similar information increased or decreased participation. Our application of voting illustrates that our results extend qualitatively to some settings with more than two agents. However, in general, 
with more than two groups, no order of similarity can yield a complete characterization. This is because the probability of being pivotal in a regime-change game is often quasi-concave---first increasing in the number of groups with similar information, and then decreasing---and random variables are ranked according to the quasi-concave order if and only if they have the same distribution.\footnote{To be ranked according to the quasi-concave order means that the expectations of any quasi-concave function are ranked. To see why this ranking implies the same distribution, notice that $f(x):= \ind_{x\ge z}$ and $f(x) := \ind_{x\le z}$ are both quasi-concave.} This means that in general, the participation incentives will not be ranked as we vary information similarity. In the online appendix, we propose a notion of similarity of $n$-dimensional random variables and show 
that results similar to Theorems~\ref{Theorem: CAD helps in encouragement} and \ref{Theorem: CAD hurts in discouragement} are valid, even though the characterization is partial. We use an $n$-dimensional order due to \cite{meyer1990interdependence}, which is stronger than the one in Section~\ref{Section: voting}.

\paragraph{Informativeness of turnout:} In many settings, individuals use protests and petitions to convey dispersed private information to policymakers, and in turn, policymakers use the observed participation in protests or petition to \emph{infer} the state of the world and then decide whether to change policy. Our model abstracts from this, as regime change occurs whenever turnout exceeds an exogenous threshold. We consider a version of our model in which a strategic policymaker observes the realized turnout, updates her belief about the state of the world, and then decides whether to change the regime. Recent works by \cite{battaglini2017public} and \cite{ekmekci2019informal} use a similar setup and focus on the informational role of turnout. We can apply our framework to ask: does increased similarity of information affect the informational content of participation and improve information aggregation? We first show that informativeness of equilibria can decrease with more similar  information.\footnote{This is consistent with current public discourse. For instance, in a piece about technology and protests in the \emph{Atlantic}, Zeynep Tufekci writes, ``Protests are signals: `We are unhappy, and we won’t put up with things the way they are.' But for that to work, the `We won’t put up with it' part has to be credible. Nowadays, large protests sometimes lack such credibility, especially because digital technologies have made them so much easier to organize.''(See \url{https://www.theatlantic.com/technology/archive/2020/06/why-protests-work/613420/})} The intuition is that when information becomes more similar, holding the strategies fixed, the policymaker wants to lower the threshold, and this has two opposing effects. On the one hand, a lower threshold encourages more participation because individuals are more likely to make a difference. On the other hand, a lower threshold exacerbates free-riding. We also show that when the threshold belief at which the policymaker changes the regime is not extreme (in an intermediate range), increasing information similarity can enable information aggregation that would have been impossible under conditionally independent signals.

\paragraph{Information design:} A natural question is how a designer might choose the optimal level of similarity of information, given a certain objective. In the online appendix, we derive the information structure that maximizes expected participation when regime change is beneficial (in $\stater = 1$). We show that in encouragement environments, the optimal information structure is full correlation: both groups receive identical signals. In discouragement environments, interior levels of similarity---neither conditionally independent signals nor full correlation---can be optimal if the conditionally independent signals do not satisfy condition M. Our analysis restricts attention to information structures that are (weakly) more similar than conditionally independent signals. More generally, in discouragement environments, some negative interdependence may be desirable. 

\paragraph{Changing marginal distributions:} 
Throughout we restrict attention to changes in the joint distribution while keeping the marginal distributions unchanged. This assumption ensures that our results are not driven by the changes in the fundamental uncertainty about $\stater$. 
However, in practice, increases in information similarity can simultaneously endow agents with more information about the fundamentals. 
It is straightforward to provide examples when more similar information can lead to lower participation in the discouragement environment (as in Theorem~\ref{Theorem: CAD hurts in discouragement}) even when the marginal distributions change. In Appendix~\ref{Appendix: section on changing information similarity}, we define an order of similarity that does not require marginals to be unchanged. Using that, Proposition~\ref{Proposition: changing marginal result} delivers a similar result to Theorem~\ref{Theorem: CAD helps in encouragement} if we further require that the groups' marginal distributions over posteriors are ranked in the ``more spread-out order'' when $\stater=1$.\footnote{Without this condition, in encouragement environments, information that is both more similar and more Blackwell-informative  informative about $\stater$ can lead to a lower expected participation. Examples are available with the authors.} 

\newpage
\bibliography{ref-coordination.bib}
\end{spacing}
\appendix
\newpage

\section{Appendix: Proofs}
\subsection{Proof of Lemma~\ref{Lemma: Expected protest size}}
\bprf
Let $\sigvec=(x_1,x_2,\ldots x_G)$ be a profile of signal realizations. By definition,  
\begin{align*}
    \protsize(\strategy) =& \sum_{\sigvec \in \signalset^{\ngroups}} \jcdf^1(\sigvec) \left[\sum_{g=1}^{\ngroups} \pop \ind_{\sig_g \in P_g}\right]\\
    =& \sum_{\sigvec \in \signalset^{\ngroups}} \sum_{g=1}^{\ngroups} \jcdf^1(\sigvec)  \pop \ind_{\sig_g \in P_g}\\
    =&\sum_{g=1}^{\ngroups} \sum_{\sig_g \in \signalset} \margdist^1_g(\sig_g)  \pop \ind_{\sig_g \in P_g}\\
    =& \pop\sum_{g=1}^\ngroups \mcdf^1(P_g).
\end{align*}
where the last equality is due to the exchangeability of the distribution.  
\eprf

\subsection{Proof of Proposition~\ref{prop:equilibrium}}

\bprf 
Consider the payoff difference for any agent between participating and not. Let $u_g(a,\signal;\strategy)$ be the expected payoff of an agent from group $g$ by playing action $\action$ given that $\sigr_g =\signal$, and the players are following $\strategy$. We define the net expected payoff from participation as
   $$\De_g(\signal;\strategy):=  u_g(1,\signal;\strategy) - u_g(0,\signal;\strategy)$$
Consider an agent in group $g$. 
Suppose that $\sigr_g = \signal$. If she participates, she incurs a cost $c$ and gets a positive payoff only if the regime change is beneficial ($\stater = 1$) and at least $\thresholdr$ other agents participate. If she does not participate, then she gets a positive payoff only if $\stater = 1$ and the turnout is at least $\thresholdr+1$ without her participation. So, assuming that players play according to $\strategy$, we have  
$$
    \De_g(\signal;\strategy) 
$$
$$    
    = \prob\left(\{\stater = 1\}\bigcap \{\Actionr_{-g} \ge \thresholdr\} \bigg\vert \sigr_g = \signal\right) - c 
    - \prob\left(\{\stater = 1\}\bigcap \{\Actionr_{-g} \ge \thresholdr+1\} \bigg\vert \sigr_g = \signal\right) 
$$
$$    
    = \prob(\stater = 1 \vert \sigr_g = \signal) \Big[ \prob(\Actionr_{-g} \ge \thresholdr \vert \sigr_g = \signal, \stater = 1)
    - \prob(\Actionr_{-g} \ge \thresholdr +1\vert \sigr_g = \signal, \stater = 1)\Big]-c 
$$
$$    
    = \mu(\signal) \prob(\Actionr_{-g} = \thresholdr \vert \stater = 1, \sigr_g = \signal) - c 
$$    
To simplify the above expression further, consider two cases. (i) Suppose $\signal\in P_{g}$. If the realized signal for the other group $\signal_{-g} \in P_{-g}$, then $\Actionr_{-g} = \popr_g-1 + \popr_{-g}$,  and if $\signal_{-g} \notin P_{-g}$, then $\Actionr_{-g} = \popr_g-1$. (ii) Suppose $\signal\notin P_{g}$. Then, $\signal_{-g} \in P_{-g}$ $\implies$ $\Actionr_{-g} = \popr_{-g}$, and $\signal_{-g} \notin P_{-g}$ $\implies$ $\Actionr_{-g} = 0$. Therefore,

\begin{align*}
    \De_g(\signal;\strategy) =& \begin{cases}  
    \posterior(\signal)\Big[\jcdf^1_\signal(P_{-g}) Prob\left(\popr_g-1 + \popr_{-g}= \thresholdr \mid \stater = 1, \sigr_g = \signal\right) \\
    \quad + \jcdf^1_\signal(NP_{-g}) Prob(\popr_g -1= \thresholdr \mid \stater =1, \sigr_g = \signal)\Big]    & \text{ if } \signal \in P_{g} \\
     \posterior(\signal) \jcdf^1_\signal(P_{-g}) Prob( \popr_{-g} = \thresholdr \mid \stater = 1, \sigr_g = \signal)  & \text{ if } \signal \notin P_{g}   \end{cases} 
\end{align*}
Finally, $\strategy$ is an equilibrium if, for all $\signal \in \signalset$ and for all $g\in \{1,2\}$,
\begin{enumerate}
    \item $\strategy_g(\signal) = 1 \implies \De_g(\signal;\strategy) \ge 0$. 
    \item $\strategy_g(\signal) = 0 \implies \De_g(\signal;\strategy) \le 0$. 
\end{enumerate}
The expression in the proposition follows from noting that according to an agent in group $g$, $\popr_g-1 + \popr_{-g}\sim \totalnpmfag(\cdot)$, $\popr_g-1\sim \ngpmfag(\cdot)$, $\popr_{-g}\sim \ngpmf(\cdot)$, and $\thresholdr\sim \tpmf(\cdot)$. 

\eprf 

\subsection{Proof of Theorem~\ref{Theorem: CAD helps in encouragement}}
\bprf 
We prove this result using two steps. In Step 1, we establish that the maximal equilibrium is symmetric. In step 2, we show that the symmetric equilibrium under $\widehat \jcdf^1$ remains an equilibrium under $\jcdf^1$.

\subparagraph{Step 1:} With some abuse of notation, we say that $(P_1, P_2) \in \eq(\jcdf)$ to mean that $\sigma := (\ind_{\sigr_1 \in P_1}, \ind_{\sigr_2 \in P_2}) \in \eq(\jcdf)$.

\blemma\label{Lemma: restrict to symmetric strategies w/o loss in enco} There is a unique maximal equilibrium in encouragement environments, and it is symmetric. \elemma

\bprf

Suppose that $\strategy$ is some asymmetric equilibrium with participation sets $P_1$ and $P_2$ for groups $1$ and $2$ respectively. We show that $\exists$ a symmetric equilibrium $\widehat \strategy$ with a participation set $P \supseteq P_1 \cup P_2$.

For any set $S$, define
\begin{align}
    \mct(S) := S \bigcup \left\{ \signal \in \signalset: \jcdf^1_\signal(S) {\pivother} \ge \frac{c}{\mu(\signal)}\right\}.
\end{align}
In words, $\mct(S)$ adds those signals to $S$ (if there are any) at which an agent wants to participate if he believes that his group will not participate but the other group will participate if they receive a signal in $S$.

\bclaim\label{Claim: operator T satisfies ICP} Let $\strategy = (\ind_{P_1},\ind_{P_2})$ be a strategy profile such that \eqref{Equation: IC P} is satisfied for all $\signal \in P_1 \cup P_2$ given that players follow $\strategy$. Then, for all $\signal \in \mct(P_1 \cup P_2)$, 
\begin{align*}
    \jcdf^1_\signal(\mct(P_1 \cup P_2)) \pivtwo + (1-\jcdf^1_\signal(\mct(P_1 \cup P_2))) \pivone \ge \frac{c}{\posterior(\signal)}
\end{align*}
\eclaim 

\bprf Since $(P_1,P_2) \in \eq(\jcdf)$, \eqref{Equation: IC P} implies, 
\begin{align*}
& \jcdf^1_\signal(P_2) \pivtwo + (1-\jcdf^1_\signal(P_2))\pivone \ge \frac{c}{\posterior(\signal)} & \text{ if } \signal \in P_1\\
    & \jcdf^1_\signal(P_1) \pivtwo + (1-\jcdf^1_\signal(P_2))\pivone \ge \frac{c}{\posterior(\signal)} & \text{ if } \signal \in P_2
\end{align*}
Since $\enco$ and $\jcdf^1_\signal(\cdot)$ is monotonic (in the set inclusion order), 
\begin{align*}
    \jcdf^1_\signal(P_1 \cup P_2) \pivtwo + (1-\jcdf^1_\signal(P_1 \cup P_2)) \pivone \ge \frac{c}{\mu(\signal)} \quad \forall \signal \in P_1 \cup P_2
\end{align*}
If $\mct(P_1 \cup P_2) \neq P_1 \cup P_2$, then, for all $\signal \in \mct(P_1 \cup P_2)\backslash (P_1 \cup P_2)$, we have,  
\begin{align*}
    \jcdf^1_\signal(P_1 \cup P_2) {\pivother} &\ge \frac{c}{\posterior(\signal)}\\
    \implies \jcdf^1_\signal(P_1 \cup P_2) \pivtwo + (1-\jcdf^1_\signal(P_1 \cup P_2) )\pivone &\ge \frac{c}{\posterior(\signal)}
\end{align*}
where the inequality is due to $\enco > 0$ and $\max\{\pivtwo,\pivone\} \ge \pivother$ (see assumption \ref{Assumption: pivotal with one or two groups more likely than pivotal other}). 
\eprf

Define, $\mct^*(P_1 \cup P_2) : = \mct^{\vert \signalset\vert }(P_1,P_2)$. First, notice that $\mct(\cdot)$ is an increasing (in the set-inclusion order) map. Therefore, either $\mct^*(S) = \signalset$ (due to the finiteness of $\signalset$), or $S \subseteq \mct^*(S) \subset \signalset$ for any $S$.  If $\mct^*(P_1 \cup P_2) \neq \signalset$, then, by definition, \eqref{Equation: IC NP} is satisfied for all $\signal \notin \mct^*(P_1 \cup P_2)$ when players play $(\ind_{\sigr_1 \in \mct^*(P_1\cup P_2)},\ind_{\sigr_2\in \mct^*(P_1 \cup P_2)})$. Moreover, \eqref{Equation: IC P} is satisfied when both the groups play $a=1$ on $\mct^*(P_1\cup P_2)$ by Claim \ref{Claim: operator T satisfies ICP}. 
Therefore, given any equilibrium $(P_1, P_2)$, $\mct^*(P_1 \cup P_2)$ is a larger symmetric equilibrium. 

Finally, let $(P,P)$ and $(P',P')$ be two maximal symmetric equilibria with $P \neq P'$. First, Claim \ref{Claim: operator T satisfies ICP} establishes that \eqref{Equation: IC P} is satisfied for all signals in $\mct(P \cup P')$, and hence for all the signals in $\mct^*(P \cup P')$. By construction, \eqref{Equation: IC NP} is satisfied for all the signals outside of $\mct^*(P \cup P')$. Therefore, $\mct^*(P\cup P')$ is an equilibrium, and $P, P' \subseteq \mct^*(P \cup P')$, contradicting the maximality of $P, P'$. Therefore, $P = P'$, i.e., there is a unique maximal equilibrium. 
\eprf

\subparagraph{Step 2:} By definition, $\jcdf^1 \cad \jcdfhat^1$ implies the following:
\begin{enumerate}
    \item $\jcdf^1_\signal(P(\strategy)) \ge \widehat \jcdf^1_\signal(P(\strategy))$ for all $\signal \in P(\strategy)$. 
    \item $\jcdf^1_\signal(P(\strategy)) \le \widehat \jcdf^1_\signal(P(\strategy))$ for all $\signal \in NP(\strategy)$.
\end{enumerate}
Consider ~\eqref{Equation: IC P}, the incentive constraint for an agent who participates. The left hand side is a convex combination of $\pivtwo$ and $\pivone$. In encouragement environments, a higher weight on $\pivtwo$ increases the LHS of \eqref{Equation: IC P}, making the constraint easier to satisfy. Therefore,  
\begin{align*}
     \jcdf^1_\signal(P(\strategy)) \pivtwo + (1-\jcdf^1_\signal(P(\strategy))) \pivone    & \ge   \widehat \jcdf^1_\signal(P(\strategy))    \pivtwo + (1-\widehat \jcdf^1_\signal(P(\strategy))) \pivone\\
    & \ge \frac{c}{\mu(\signal)} \mbox{ } \text{ since }\strategy \in \eq(\widehat\jcdf).
\end{align*}
Therefore, \eqref{Equation: IC P} is satisfied for all $\signal \in P(\strategy)$ under $\jcdf$. 
Similarly, for all $x \in NP(\strategy)$, $\jcdf^1_\signal(P(\strategy)) \le \widehat \jcdf^1_\signal(P(\strategy))$. Therefore, \eqref{Equation: IC NP} is satisfied for all $\signal \in NP(\strategy)$ under signal $\jcdf$. Therefore, $\strategy \in \eq(\jcdf)$.  

Finally, since $\protsize(\cdot)$ depends only the marginal distributions (by Lemma \ref{Lemma: Expected protest size})---which are the same in $\jcdf$ and $\widehat \jcdf$---it follows that $\protsize^*(\jcdf) \ge \protsize^*(\jcdfhat)$. 
\eprf 

\subsection{Proof of Theorem~\ref{Theorem: CAD hurts in discouragement}}

\bprf[Proof of Theorem \ref{Theorem: CAD hurts in discouragement}]
We prove this result using two steps. In step 1, we show that the maximal equilibrium under $\widehat \jcdf^1$ may no longer be an equilibrium under $\jcdf^1$. In step 2, we show that no larger equilibrium can emerge when information becomes more similar. 

\subparagraph{Step 1:} Let $\strategy^*$ be a maximal equilibrium with the associated participation sets $(P_1^*,P_2^*)$.

\begin{enumerate}[\text{Case }1.]
    \item $P_1^* = P_2^*$

Let $P:= P_1^* = P_2^*$. By definition, $\jcdf^1 \cad \jcdfhat^1$ implies the following:
\begin{enumerate}
    \item $\jcdf^1_\signal(P) \ge \jcdfhat^1_\signal(P)$ for all $\signal \in P$.
    \item $\jcdf^1_\signal(P) \le \jcdfhat^1_\signal(P)$ for all $\signal \in NP$.
\end{enumerate}
Consider ~\eqref{Equation: IC P}, the incentive constraint for an agent who participates. The left hand side is a convex combination of $\pivtwo$ and $\pivone$. In discouragement environments, a higher weight on $\pivtwo$ decreases the LHS of \eqref{Equation: IC P}. Therefore, for all $\signal \in P$,
\begin{align*}
     \jcdf^1_\signal(P) \pivtwo + (1-\jcdf^1_\signal(P)) \pivone  
    \le   \jcdfhat^1_\signal(P) \pivtwo + (1-\jcdfhat^1_\signal(P)) \pivone
\end{align*}
Therefore, \eqref{Equation: IC P} may fail for some $\signal \in P$ under $\jcdf$, in which case $\strategy$ may no longer be in $\eq(\jcdf)$.

    \item $P_1^* \neq P_2^*$ 

    Then, $\jcdf^1_\signal(P_i^*) \ge \jcdfhat^1_\signal(P_i^*)$ for all $i \in \{1,2\}$ and $\signal \in P_i^*$. Notice that at least one of $P_1^*\setminus P_2^*$ and $P_2^*\setminus P_1^*$ is not $\emptyset$. Let $P_1^* \backslash P_2^*\neq \emptyset$ wlog. Consider some $\signal \in P_1^* \backslash P_2^*$. Agents in group $2$ must find it incentive compatible to not participate when they receive a signal in $P_1^*$. Therefore, we must have, 
    \begin{align*}
        \jcdf^1_\signal(P_1^*) \pivother \le \frac{c}{\posterior(\signal)}
    \end{align*}
for all $\signal \in P_1^*$. However, since $\jcdf^1_\signal(P_1^*) \ge \jcdfhat^1_\signal(P_1^*)$ for all $\signal \in P_1^*$, \eqref{Equation: IC NP} is harder to satisfy for any $\signal \in P_1^* \backslash P_2^*$, and for any $\signal \in P_2^* \backslash P_1^*$. 
Hence, $\strategy^*$ may no longer be in $\eq(\jcdf)$. 
\end{enumerate}

\subparagraph{Step 2:} Next, consider any $(\widehat P_1, \widehat P_2) \ge (P_1^*,P_2^*)$.  For $(\widehat P_1,\widehat P_2)$ to be an equilibrium, one necessary condition is \eqref{Equation: IC NP} for signals in $\widehat P_1$ and $\widehat P_2$ for groups $2$ and $1$ respectively. That is, we need that, at least one of the following holds: 
\begin{align*}
    \jcdf^1_\signal(\widehat P_1) \pivother \le \frac{c}{\posterior(x)} & \text{ if } \signal \in \widehat P_1 \\
    \jcdf^1_\signal(\widehat P_2) \pivother \le \frac{c}{\posterior(x)} & \text{ if } \signal \in \widehat P_2 
\end{align*}

Suppose that $\widehat P_1 \cap \widehat P_2 = \emptyset$. Since $\jcdfhat$ satisfies condition M (Definition \ref{Definition: maximality due to free-riding}), part $(M2)$ implies that, for some $\signal \in \widehat P_1 \cup \widehat P_2$, 
\begin{align*}
    \jcdfhat^1_\signal(\widehat P_1) \pivother > \frac{c}{\posterior(x)} & \text{ if } \signal \in \widehat P_1 \\
    \jcdfhat^1_\signal(\widehat P_2) \pivother > \frac{c}{\posterior(x)} & \text{ if } \signal \in \widehat P_2 
\end{align*}

By CAD, $\jcdf^1_\signal(\widehat P_1) \ge \jcdfhat^1_\signal(\widehat P_1)$ if $\signal \in \widehat P_1$ (and analogously for $\widehat P_2$). Therefore, $(\widehat P_1,\widehat P_2)$ cannot be an equilibrium in $\jcdf$ if $\widehat P_1 \cap \widehat P_2 = \emptyset$. 

Suppose that $\widehat P_1 \cap \widehat P_2 \neq \emptyset$. Since $\jcdfhat$ satisfies Condition M, if $(M1)$ holds for some $\signal \in \widehat P_1 \cap \widehat P_2$, then, 
\begin{align*}
    \min_{i \in \{1,2\}} \left\{\jcdfhat^1_\signal(\widehat P_i) \pivtwo + (1-\jcdfhat^1_\signal(\widehat P_i)) \pivone\right\} < \frac{c}{\posterior(\signal)}
\end{align*}
By CAD, $\jcdf^1_\signal(\widehat P_i) \ge \jcdfhat^1_\signal(\widehat P_i)$ for $i\in \{1,2\}$. Since $\disco$, this implies that,\begin{align*}
    \min_{i\in \{1,2\}} \left\{ \jcdf^1_\signal(\widehat P_i) \pivtwo + (1-\jcdf^1_\signal(\widehat P_i)) \pivone\right\} < \frac{c}{\posterior(\signal)}
\end{align*}
Therefore, \eqref{Equation: IC P} fails for such an $\signal$.

Finally, if \eqref{Equation: IC P} is satisfied for all $\signal \in \widehat P_1 \cap \widehat P_2$, then, by Condition M $(ii)$, the exact same argument as in the case when $\widehat P_1 \cap \widehat P_2 = \emptyset$ implies that $(\widehat P_1, \widehat P_2)$ cannot be an equilibrium under $\jcdf$. Therefore, no larger equilibrium can exist under $\jcdfhat$, i.e., $\protsize^*(\jcdfhat) \le \protsize(\jcdf)$. 
\eprf

\subsection{Proof of Proposition \ref{Proposition: probability of success}}
\label{Appendix: probability of success}

We first derive a useful property of the CAD order. \blemma\label{Lemma: CAD means transpose of size alpha} If $\ydist \cad \yhatdist$, then, for every $T \subseteq \mathcal Y$, $\exists$  $\a_T \ge 0$ such that $\ydist(T,T) =  \yhatdist(T,T) + \a_T$, and $\ydist(T,\mathcal Y\backslash T) = \yhatdist(T,\mathcal Y\backslash T) - \a_T$.\elemma

\noindent

\bprf[Proof of Lemma~\ref{Lemma: CAD means transpose of size alpha}]
Suppose $\mathcal Y\subset \mathbb R$ is finite, and $Y$ and $\widehat Y$ are two~$\mathcal Y^2$-valued random variables with joint distributions $\ydist$ and $\yhatdist$ respectively, and identical marginals. Consider any two distinct points in the support of $Y$, say $y_j, y_k$.  Define an ``elementary transformation along identical intervals'' (ETI) as an operation in which, for some $\a > 0$, we increase the probability mass on points $(y_j,y_j)$ and $(y_k,y_k)$ each by $\a$, and reduce the probability mass on $(y_j,y_k)$ and $(y_k,y_j)$ each by $\a$. An alternative characterization of our CAD order in two dimensions is that $\ydist \cad \yhatdist$ if and only if $\ydist$ can be derived from $\yhatdist$ by a finite sequence of ETIs. This follows from \cite{meyer1990interdependence} (Proposition 1).  We use this characterization to establish Lemma~~\ref{Lemma: CAD means transpose of size alpha}. 

Let $(y_{i,k},y_{j,k})_k$, $k=1,\ldots n$, be a finite set of points in $\mathcal Y^2$ describing a sequence of ETIs, each with a mass $\a_k$, to obtain $\ydist$ from $\yhatdist$. Let the resulting distribution after the $k$-th ETI be $\yhatdist_k$. So, 
$\yhatdist_1 = \yhatdist$ and $\yhatdist_n = \ydist$. If $(y_{i,k},y_{j,k}) \in T\times T$ or $(\mathcal Y \backslash T) \times (\mathcal Y \backslash T)$, then $\yhatdist_k(T,T) = \yhatdist_{k-1}(T,T)$. On the other hand, if exactly one of $\{y_{i,k},y_{j,k}\}$ is in $T$ for some $k$, then $\yhatdist_k(T,T) = \yhatdist_{k-1}(T,T) + \a_k$. Therefore, $\ydist(T,T) = \yhatdist(T,T) + \sum_{k=1}^n \a_k$. Since an ETI leaves the marginal distribution unchanged, therefore $\ydist(T, \mathcal Y\backslash T) = \yhatdist(T,\mathcal Y\backslash T) - \sum_{k=1}^n \a_k$. The lemma follows. 

\eprf 

\bprf[Proof of Proposition~\ref{Proposition: probability of success}]
Towards establishing $(1)$, suppose $\enco$ and $\jcdf^1\cad \jcdfhat^1$. First, by the argument in Theorem~\ref{Theorem: CAD helps in encouragement}, $\eq(\jcdfhat) \subseteq \eq(\jcdf)$. Consider any equilibrium $\strategy \in \eq(\jcdfhat)$, and let the associated participation and nonparticipation sets be P and NP respectively. By Lemma \ref{Lemma: CAD means transpose of size alpha}, $\jcdf^1(P,P) = \jcdfhat^1(P,P) + \a$ and $\jcdf^1(P,NP) = \jcdfhat^1(P,NP) - \a$ for some $\a\ge 0$.\footnote{$\jcdf^1(T,T)$ is a shorthand to denote $\jcdf^1(\{\sigr_1 \in T, \sigr_2\in T\})$.}  Therefore, 
\begin{align*}
    \probsucc(\strategy;\jcdf) =& (1-\pcdf(\threshold,2\pop))\jcdf^1(P,P) + 2((1-\pcdf(\threshold,\pop))\jcdf^1(P,NP) \\
    =&(1-\pcdf(\threshold,2\pop))(\jcdfhat^1(P,P) +\a)    +  2((1-\pcdf(\threshold,\pop))(\jcdfhat^1(P,NP)-\a) 
\end{align*}
Simplifying, we get
\begin{align*}
   \probsucc(\strategy;\jcdf) - \probsucc(\strategy;\jcdfhat) 
    =& \a \left( 2 \pcdf(\threshold,\pop) - \pcdf(\threshold,2\pop) - 1 \right)
\end{align*}
It is easy to check that $\frac{\partial \pcdf(k,x)}{\partial x} = - \ppdf(k,x)$. Defining $\De(N):= 2 \pcdf(\threshold,\pop) - \pcdf(\threshold,2\pop) -1$, we have $\De'(N) = 2 ( \ppdf(\threshold,2\pop) - \ppdf(\threshold,\pop) > 0$ by assumption. Also, $\lim_{\pop\to 0} \De(\pop) = 0$. Therefore, $\De(\pop) > 0$ for any $\pop$ such that $\enco$. Hence, $\probsucc(\strategy;\jcdf) \ge \probsucc(\strategy;\jcdfhat)$ with the inequality being strict whenever $\a > 0$. Therefore, $\overline\probsucc(\jcdf)\ge \overline\probsucc(\jcdfhat)$.  

Towards establishing $(2)$, suppose that $\overline\probsucc(\jcdf) > \overline\probsucc(\jcdfhat)$, $\jcdf^1 \cad \jcdfhat^1$, and $\jcdfhat^1$ satisfies Condition M (Probability). Let $\strategy^*$ and $\widehat\strategy^*$ be the equilibria with maximum probabilities of success under $\jcdf$ and $\jcdfhat$ respectively. 

First, we argue that $\probsucc(\strategy^*;\jcdfhat) > \probsucc(\widehat\strategy^*;\jcdfhat)$. Suppose not. By the previous argument, we have that, \emph{for any strategy profile $\strategy$,}, $\probsucc(\strategy; \jcdf) < \probsucc(\strategy;\jcdfhat)$ whenever $\disco$. Therefore, 
\begin{align*}
    \probsucc(\widehat\strategy^*;\jcdfhat) < \probsucc(\strategy^*;\jcdf)\le \probsucc(\strategy^*;\jcdfhat) \le \probsucc(\widehat\strategy^*;\jcdfhat);
\end{align*}
a contradiction. 

But, if $\probsucc(\strategy^*;\jcdfhat) > \probsucc(\widehat\strategy^*;\jcdfhat)$, then, by Condition-M (Probability), $\strategy^* \notin\eq(\jcdf)$. Therefore, it cannot be an equilibrium with maximum probability under $\jcdf$. 
\eprf

While Proposition~\ref{Proposition: probability of success} characterizes the effect of increasing interdependence on the probability of success when $\stater=1$, it raises a closely related question of how increasing similarity affects overall welfare---the ex-ante expected utility of a representative agent in any group~$g$.

To this end, define welfare given an information structure and a strategy profile, and maximal welfare achievable in any equilibrium~$\welfare^*$ as follows.\footnote{Exchangeability of the distribution of $\sigr$ implies that welfare does not depend on group identity.}
The following is a corollary of Proposition~\ref{Proposition: probability of success}
\begin{align*}
    \welfare(\strategy,\jcdf):=& \mu_1 \probsucc(\strategy;\jcdf) - c \E_{\jcdf}\left[\strategy(\sigr_g)\right]\\
    \welfare^*(\jcdf):=&\max_{\strategy \in \eq(\jcdf)} \welfare(\strategy,\jcdf)
\end{align*}
The above expression makes it clear that welfare unambiguously increases whenever $\enco$ since $\probsucc(\strategy;\cdot)$ increases with similarity for any $\strategy$ due to Proposition~\ref{Proposition: probability of success}. Analogous reasoning (along with an analogous Condition M (Probability) replaced by Condition M (Welfare)) establishes that welfare, too, decreases under discouragement environments. We choose not to present it as a formal proposition. 

\subsubsection{Turnout conditional on protests}\label{Subsection: appendix size of protests conditional on protests}
In the main text, we focus on how information similarity affects the equilibrium participation. However, even when we fix an equilibrium, information similarity makes the participation more coordinated. Accordingly, conditional on there being a protest, we may see that the protests are more likely to bring about social changes. The following proposition formalizes this intuition.

\bprop\label{Proposition: expected protsize conditional on there being protests}
Suppose $\jcdf^1 \cad \jcdfhat^1$, and $\strategy \in \eq(\jcdf)\cap \eq(\jcdfhat)$ and $\strategy$ is symmetric. Then, $$\jcdf^1\left[\Actionr(\strategy)> \threshold\Big\vert  \Actionr(\strategy) > 0\right] \ge \jcdfhat^1\left[\Actionr(\strategy)>\threshold\Big\vert \Actionr(\strategy) > 0\right].$$
\eprop 
\bprf 
Let $H_2(\cdot), H_1(\cdot)$ be the CDFs of $\popr_1 + \popr_2$ and $\popr_1$ (and $\popr_2$ by symmetry) respectively. Suppose that $\strategy \in \eq(\jcdf) \cap \eq(\jcdfhat)$ and $\jcdf^1 \cad \jcdfhat^1$. Under $\jcdfhat$, the probability of there being a protest at all when $\stater=1$ is $\jcdf^1(\{\Actionr > 0\}) = 1 - \jcdfhat^1(NP,  NP)$. Here $\jcdf^1(NP,NP)$ means $\jcdf^1(\{\sigr_1 \in NP(\strategy), \sigr_2\in NP(\strategy)\})$. 
Therefore, 
\[\jcdf^1\left(\{\Actionr(\strategy)> \threshold)\} \Big\vert \{\Actionr(\strategy) > 0\}\right) =   \frac{\left[(1-H_2(\threshold)) \jcdf^1(P,P) + 2 (1-H_1(\threshold)) \jcdf^1(P,NP)\right]}{1-\jcdf^1(NP,NP)}.\]
Then, by Lemma~\ref{Lemma: CAD means transpose of size alpha}, there exists $\alpha>0$ such that 
\begin{align}
  &\jcdf^1\left(\{\Actionr(\strategy)> \threshold)\} \Big\vert \{\Actionr(\strategy) > 0\}\right) &   \nonumber \\
    & = \frac{\left[(1-H_2(\threshold)) (\jcdfhat^1(P,P)+\a) + 2 (1-H_1(\threshold)) (\jcdfhat^1(P,NP)-\a)\right]}{1-\jcdfhat^1(NP,NP)-\a}& \nonumber\\
     & = \underbrace{\frac{\jcdfhat^1(P,P) + \a}{1-\jcdfhat^1(NP,NP) - \a}}_{\textrm{Increasing in } \alpha} \left( 1-H_2(\threshold)\right) +\underbrace{\left(1-\frac{\jcdfhat^1(P,P) + \a}{1-\jcdfhat^1(NP,NP) - \a}\right)}_{\textrm{Decreasing in }\alpha}\left(1-H_1(\threshold)\right))& \nonumber
     \end{align}
The equality follows from noting that $(\jcdfhat^1(P,P) + \a) + 2 (\jcdfhat^1(P,NP) - \a) = 1-\jcdfhat^1(NP,NP) -\a$. Notice that the above expression is a convex combination of $1-H_2(\threshold)$ and $1-H_1(\threshold)$. Since $\popr_1, \popr_2 \ge 0$ a.s., $H_2(\cdot) \le H_1(\cdot)$. Therefore, the LHS puts a larger weight on the larger term for any $\a \ge 0$, which implies   
     \begin{align}
    &\jcdf^1\left(\{\Actionr(\strategy)> \threshold)\} \Big\vert \{\Actionr(\strategy) > 0\}\right) &   \nonumber \\
      &>\frac{\jcdfhat^1(P,P) }{1-\jcdfhat^1(NP,NP) } \left( 1-H_2(\threshold)\right) +\left(1-\frac{\jcdfhat^1(P,P) }{1-\jcdfhat^1(NP,NP) }\right)\left(1-H_1(\threshold)\right)&\nonumber\\
      &= \jcdfhat^1\left(\{\Actionr(\strategy)> \threshold)\} \Big\vert \{\Actionr(\strategy) > 0\}\right). \nonumber
\end{align}

\eprf

\subsection{Proof of Proposition  \ref{Proposition: voting changes in similarity like theorem 1 and 2}}
\bprf 
Suppose that $k^* < \threshold$. Then, $\strategy^1\in \eq(\jcdfhat)$ implies that $\widehat\g^1_1(\threshold) \ge \frac{c}{\posterior(1)}$. By CAD, $\g^1_1(\threshold) \ge \widehat \g^1_1(\threshold)$. Therefore, \eqref{ICvoting} continues to be satisfied under $\jcdf$, and hence $\strategy^1 \in \eq(\jcdf)$. 

On the other hand, if $k^* \ge \threshold$, then $\g^1_1(\threshold) \le \widehat \g^1_1(\threshold)$. Therefore, it is possible that $\strategy^1 \in \eq(\jcdfhat)$ but $\strategy^1\notin \eq(\jcdf)$. Hence, part $(2)$ of the proposition follows. 
\eprf

\subsection{Optimal voting rule from Section~\ref{Section: voting} }\label{Section: appendix optimal voting rule}

Since there are multiple equilibria, we assume the maximal equilibrium is played, and the optimal vote threshold is one that maximizes the probability that a rate increase is implemented conditional on it being warranted; that is, the optimal threshold $\threshold^*(\jcdf)$ is given by$$\threshold^*(\jcdf) := \argmax_{\threshold} \jcdf^1(I\geq \threshold+1|\stater=1),$$ subject to the \eqref{ICvoting} constraint.\footnote{The qualitative argument is unchanged if we assume a small negative payoff from raising rates when not necessary to do so. Essentially, this formulation assumes that the cost of inflation due to failure to raise rates when required far outweighs a contemporaneous loss in output.} In the absence of the incentive constraint, the lowest possible $\threshold$ would be optimal. However, a low $\threshold$ reduces an individual member's incentive to vote in favor of a rate increase (conditional on private information and the increase being warranted). So the optimal rule is the lowest vote threshold that satisfies the incentive constraint. We can ask how this optimal threshold varies with committee diversity.

\bprf Consider $k^* < \threshold^*(\jcdfhat)$. Then, by definition, 
$$\gamma^1_1(\threshold^*(\jcdfhat))\ge \widehat\gamma^1_1(\threshold^*(\jcdfhat))\ge \frac{\cost}{\posterior(1)}.$$ 
This means under the same policy threshold $\threshold^*(\jcdfhat)$, the incentive constraint is satisfied even under more similar experiences ($\jcdf$). Since the designer's objective $\jcdf^1(I\ge \threshold+1)$ is decreasing in $\threshold$, we have $\threshold^*(\jcdf)\le \threshold^*(\jcdfhat)$.

Next, consider $k^* \ge \threshold^*(\jcdfhat)$.    Recall that $\threshold^*(\jcdfhat)$ is the lowest $\threshold$ that satisfies the incentive constraint under $\jcdfhat$. Therefore, for any $\threshold <\threshold^*(\jcdfhat)$, 
$$ \widehat\gamma^1_1(\threshold)<\frac{\cost}{\posterior(1)}.$$
Since $k^* \ge \threshold^*(\jcdfhat)>\threshold$, by definition,
$$\gamma^1_1(\threshold) \le \widehat\gamma^1_1(\threshold)<\frac{\cost}{\posterior(1)}.$$
This means for any policy $\threshold<\threshold^*(\jcdfhat)$, under more similar experiences  ($\jcdf$), the incentive constraint does not hold. Moreover, since $\gamma^1_1(\threshold^*(\jcdfhat)) \le \widehat\gamma^1_1(\threshold^*(\jcdfhat))$, even under policy $\threshold^*(\jcdfhat)$, the incentive constraint may no longer be satisfied under more similar experiences  ($\jcdf$). Therefore, $\threshold^*(\jcdf) \ge \threshold^*(\jcdfhat)$. 
\eprf 
\newpage
\section{Online Appendix (Not for publication)}

\subsection{On state-dependent changes in similarity}\label{Appendix: state dependent similarity}
Suppose that $\states = \{-1,1\}$. We keep the model unchanged in all respects otherwise. Now a regime change when $\stater = -1$ is welfare-reducing. We demonstrate how increasing similarity in state  $\stater = -1$ has the opposite effects from what we showed in the main paper. To this end, we specialize to symmetric strategies. Then, given any strategy $\strategy$, we have the associated participation and nonparticipation sets given by $P(\strategy)$ and $NP(\strategy)$ respectively.  

For $\strategy$ to be an equilibrium, the IC constraints for protesting and not-protesting are: 
\begin{align}
    &\posterior(\signal) \left[\jcdf^1_\signal(P)\pivtwo + (1-\jcdf^1_\signal(P)) \pivone \right] \nonumber\\
    -&(1-\posterior(\signal))\left[ \jcdf^{-1}_\signal(P)\pivtwo + (1-\jcdf^{-1}_\signal(P)) \pivone\right] \ge c 
& 
\text{ if } \signal \in P \tag{IC:P-S} \label{Equation: IC-P general state space}
\end{align} 
\begin{align}
    &\posterior(\signal) \jcdf^1_\signal(P)\pivother -(1-\posterior(\signal)) \jcdf^{-1}_\signal(P)\pivother \le c & \text{ if } \signal \notin P \tag{IC:NP-S} \label{Equation: IC-NP general state space}
\end{align}
The only difference from our benchmark setup is the second term in the incentive constraints. This captures the probability of being pivotal in state $\stater=-1$.
It is straightforward to see that an increase in similarity in state~$1$ (i.e., CAD increases of $\jcdf^1$) has the same impact as in the main paper (for the natural modification of Condition M for this environment). But now consider the effects of increases in similarity in $\jcdf^{-1}$. We can interpret
$$\jcdf^{-1}_\signal(P)\pivtwo + (1-\jcdf^{-1}_\signal(P)) \pivone$$ as the cost of making a difference 
in state $\state = -1$ for a participant. In encouragement (discouragement) environments, an increase in similarity increases (decreases) this cost, thus reducing (increasing) the incentive for participation among participants. In other words, CAD increase of $\jcdf^{-1}$ has the opposite impact, compared to CAD increases of $\jcdf^1$, on the incentive of the participants (\eqref{Equation: IC-P general state space}). 

For nonparticipants ($x\notin P$), higher similarity in state $\stater=1$ reduces the LHS in \eqref{Equation: IC-NP general state space} while higher similarity in state $\stater=-1$ increases it. Under the following assumption, the incentive constraint of the nonparticipants is always satisfied regardless of $\jcdf^{-1}$. 
\begin{assumption}
\label{assumption:noNPviolation}
For any $\strategy$ with $\protsize(\strategy) > \protsize(\strategy^*)$, and any $\signal \notin P(\strategy)$,  $$\posterior(\signal)\jcdf^1_\signal(P(\strategy))\pivother < c.$$
\end{assumption}
We can again use CAD to characterize the effect of information similarity. 
\bprop\label{Proposition: general state space}
Suppose $\jcdfhat^1$ satisfies assumption \ref{assumption:noNPviolation}. Let $\jcdf:=(\jcdfhat^1,\jcdf^{-1})$ be an information structure such that $\jcdf^{-1} \cad \jcdfhat^{-1}$, then 
\begin{enumerate}
    \item $\protsize^*(\jcdf) \ge \protsize^*(\jcdfhat)$ if $\disco$. And,
    \item $\protsize^*(\jcdf) \le \protsize^*(\jcdfhat)$ if $\enco$.
\end{enumerate}
\eprop 

\bprf Let $\strategy^*$ be a maximal equilibrium under $\jcdf$ with associated participation and nonparticipation sets $P^*$ and $NP^*$ respectively. 

Suppose that $\pivone > \pivtwo$. Then, for all $\signal \in P^*$
\begin{align*}
    \jcdf^{-1}_\signal(P^*) \pivtwo + (1-\jcdf^{-1}_\signal(P^*)) \pivone \le \jcdfhat^{-1}_\signal(P^*) \pivtwo + (1-\jcdfhat^{-1}_\signal(P^*)) \pivone 
\end{align*}
And, for all $\signal \notin P^*$, $\posterior(\signal) \jcdf^1_\signal(P^*) < \cost$ by Assumption \ref{assumption:noNPviolation}. Therefore, $\strategy^* \in \eq(\jcdf)$, proving part $(1)$. 

For $(2)$, for all $\signal \in P^*$, \begin{align*}
    \jcdf^{-1}_\signal(P^*) \pivtwo + (1-\jcdf^{-1}_\signal(P^*)) \pivone \ge \jcdfhat^{-1}_\signal(P^*) \pivtwo + (1-\jcdfhat^{-1}_\signal(P^*)) \pivone. 
\end{align*}
Therefore, $\strategy^*$ may not be in $\eq(\jcdf)$. Finally, consider any strategy profile $\strategy$ such that $\protsize(\strategy) \ge \protsize(\strategy^*)$. By Assumption \ref{assumption:noNPviolation}, \eqref{Equation: IC-NP general state space} holds for all $\signal \in P$ in both, $\jcdfhat$ and $\jcdf$ (Since $\jcdf^1=\widehat \jcdf^1$). Since $\strategy^*$ is a maximal equilibrium under $\jcdfhat$, $\strategy \notin \eq(\jcdfhat)$. Therefore, $\exists \signal \in P$ for whom \eqref{Equation: IC-P general state space} fails. Finally, for all $\signal \in P$, \begin{align*}
    \jcdf^{-1}_\signal(P) \pivtwo + (1-\jcdf^{-1}_\signal(P)) \pivone \ge \jcdfhat^{-1}_\signal(P) \pivtwo + (1-\jcdfhat^{-1}_\signal(P)) \pivone. 
\end{align*}
Therefore, $\strategy \notin \eq(\jcdf)$. The proposition follows.

\eprf

\subsection{On more than 2 groups}\label{Section: more than two groups}
To establish that the qualitative results extend to settings with more than two groups, we consider a specialized environment in which the group sizes $\popr_g$ are drawn from Poisson distributions with means $N$ and the resilience $\thresholdr$ is a deterministic $\threshold$, and restrict ourselves to symmetric equilibrium. Let  $\psi(k,N)=\frac{\exp(-N)N^k}{k!}$ be the probability that nature chooses a group to have $k$ agents. We use the following strong notion of similarity for $n$-dimensional random variables with $n>2$, given by \cite{meyer1990interdependence}. Let $\mc Y$ be a finite subset of $\real^n$, and let 
$\scry$ be a set of $\mc Y$-valued random variables with exchangeable distributions, i.e., the marginal distribution of $Y_i$ and $Y_j$ are equal for any $i,j$. 

\bdefn\label{Definition: Meyer n dimensional similarity}Let $Y,\yhat \in \scry$ be two random variables with distributions $\ydist,\yhatdist$ respectively. Then, $Y \scad \yhat$ if, $$\ydist(\{Y_1 = y_{i_1},Y_2 = y_{i_2},\ldots,Y_n=y_{i_n}\}) \le \yhatdist(\{\yhat_1 = y_{i_1},\yhat_2 = y_{i_2},\ldots,\yhat_n=y_{i_n}\})$$ for all $(i_1,\ldots,i_n)$ for which it is not the case that $i_1 = i_2=\ldots=i_n$.\edefn 
With this order, we can obtain results analogous to Theorem \ref{Theorem: CAD helps in encouragement} and \ref{Theorem: CAD hurts in discouragement}. 

\bprop\label{Proposition: n dimensional result with Meyer order}Suppose that $\jcdf^1\scad \jcdfhat^1$. 
\begin{enumerate}
\item If $\ppdf(\threshold, \ngroups \pop) \ge \ppdf(\threshold,k\pop)$ for all $k < \ngroups -1$, then $\protsize^*(\jcdf) \ge \protsize^*(\jcdfhat)$. 
\item If $\ppdf(\threshold,\ngroups \pop) < \ppdf(\threshold,k\pop)$ for all $k< \ngroups-1$, and $\jcdfhat$ satisfies condition M, then $\protsize^*(\jcdf)\le \protsize^*(\jcdfhat)$. 
\end{enumerate}
\eprop

\bprf 
For any set $T \subset \signalset$, we define $I(T)$ as the random variable denoting the number of other groups that receive a signal $x\in T$. Formally, for any group $g=1$ (say), $I(T):= \sum_{g\neq 1} \ind_{\sigr_g \in T}$. 
Abusing notation, we define $\jcdf^1_x(T,k)= \jcdf(I(T)=k \vert \sigr_1 = \signal,\stater=1)$ as the probability that $k=0,1,\ldots G-1$ other groups see a signal in $T$ when group $g=1$ (say) sees the signal $x$ and the state is $\stater =1$. Notice that $\jcdf^1 \scad \jcdfhat^1$ implies that 
\begin{enumerate}
    \item For $x\in T$, $\jcdf^1_x(T,k)\leq \jcdfhat^1_x(T,k)$ for all $k=0,1,2\ldots G-2$ and  $\jcdf^1_x(T,G-1)\geq \jcdfhat^1_x(T,G-1)$.
    \item For $x\notin T$, $\jcdf^1_x(T,k)\leq \jcdfhat^1_x(T,k)$ for all $k=0,1,2\ldots G-1$
\end{enumerate}

Let $\strategy^*$ be the maximal participation equilibrium in $\jcdfhat$, and let $P^*$ and $NP^*$ be the associated participation and not-participation sets respectively. 
For more than 2 groups, the IC for protesting and not-protesting can be modified as follows
\begin{align}
    \sum_{k=0}^{\ngroups-1}
    \jcdfhat^1_x(T,k)\ppdf(\threshold, (k+1)\pop) &\ge \frac{c}{\posterior(\signal)} \quad \text{ if } \signal \in P^* \tag{IC:P-G}\label{Equation: IC P more than 2 groups}\\
    \sum_{k=0}^{\ngroups-1}
    \jcdfhat^1_x(T,k)\ppdf(\threshold, k\pop) &\le \frac{c}{\posterior(\signal)} \quad \text{ if } \signal \in NP^* \tag{IC:NP-G}\label{Equation: IC NP more than 2 groups}
\end{align}
It is easy to see that when the similarity increases ($\jcdf^1\scad \jcdfhat^1$), if $\ppdf(\threshold, \ngroups \pop) \ge \ppdf(\threshold,k\pop)$ for all $k < \ngroups -1$, then the LHS increases in \eqref{Equation: IC P more than 2 groups} and decreases in \eqref{Equation: IC NP more than 2 groups}. Therefore, $\strategy^* \in \eq(\jcdf)$ and accordingly $\protsize^*(\jcdf)\ge \protsize^*(\jcdfhat)$. On the other hand, if $\ppdf(\threshold, \ngroups \pop) \le \ppdf(\threshold,k\pop)$ for all $k < \ngroups -1$, then under $\jcdf^1$ the LHS decreases in \eqref{Equation: IC P more than 2 groups}, which can violate the incentive of the participant, and accordingly, $\strategy^*$ may no longer be an equilibrium. Moreover, given that $\jcdfhat$ satisfies Condition M, given any $\strategy$ such that $P(\strategy) \ge P^*$, $\exists \signal \in P(\strategy)$ such that, 
\begin{align*}
\sum_{k=0}^{\ngroups-1}\jcdfhat^1_x(T,k) \ppdf(\threshold, (k+1)\pop) &< \frac{c}{\posterior(\signal)}\\
\implies 
\sum_{k=0}^{\ngroups-1}\jcdf^1_x(T,k) \ppdf(\threshold, (k+1)\pop) &< \frac{c}{\posterior(\signal)}.
\end{align*}
Therefore, such $\strategy$ cannot constitute an equilibrium under $\jcdf^1$, and accordingly, $\protsize^*(\jcdf)\le \protsize^*(\jcdfhat)$ with the inequality being strict whenever $\strategy^* \notin \eq(\jcdf)$. 
\eprf

\subsection{Informativeness of turnout}\label{Section: informativeness of turnout}

Consider our baseline model with two groups, two states, and a finite set of signals. For simplicity, we assume that $\popr_1,\popr_2$ are Poisson distributed with mean $\pop$; that is, let  $\psi(k,N)=\frac{\exp(-N)N^k}{k!}$ be the probability that nature chooses a group to have $k$ agents. However, now assume that a strategic policymaker observes the realized turnout and then decides whether to change the regime. We restrict attention to symmetric strategies, and as before, we denote the associated participation and nonparticipation sets of any strategy by $P$ and $NP$, respectively. 

Given an information structure $\jcdf$, agents' strategy~$\strategy$, and aggregate turnout~$\Actionr$, the policymaker's belief about the state of the world is given by the likelihood function $$\beta(\cdot; \jcdf, P) := \frac{Prob(\stater = 1 \vert \Actionr = \cdot, P)}{1-Prob(\stater = 1 \vert \Actionr = \cdot, P)}.$$
The policymaker changes the status quo only if she is sufficiently confident that the state is $1$; that is,  there is a cutoff belief, $\underline \beta$, such that the policymaker changes the status quo if $\beta(k) \ge \underline \beta > 0$.  Therefore, the policymaker's preferences are (ordinally) aligned with the citizens'. 

We define informativeness of turnout, given a strategy $\strategy$, and its associated participation set $P$, as in \cite{ekmekci2019informal}:
\begin{align*}
    I(P):= \mcdf^1(P) - \mcdf^0(P)
\end{align*}
Define $\bar P:= \{ \signal \in \signalset: \mcdf^1(\{\signal\}) > \mcdf^0(\{\signal\})$. It is easy to see that the informativeness of any strategy is bounded $I(\bar P)$. Therefore, for a given information structure $\jcdf$ with fixed marginals, we say that \emph{information aggregates} if $\ind_{\bar P}$ is an equilibrium under $\jcdf$. 

We  fix $\jcdf^0$ to investigate when, if at all, increasing similarity of information facilitates information aggregation. To this end, define 
\begin{align*}
    \underline l &:= \frac{\prior}{1-\prior} \frac{\mcdf^1(\bar P)^2}{\jcdf^0(\bar P,\bar P)}\\
    \bar l & := \frac{\prior}{1-\prior} \frac{\mcdf^1(\bar P)}{\jcdf^0(\bar P,\bar P)}.
\end{align*}

\bprop\label{Proposition: information aggregation with similarity} Suppose that $\mcdf^1(\bar P) \mcdf^0(\bar P) > \jcdf^0(\bar P,\bar P)$.\footnote{It is easy to generate examples in which this inequality is satisfied. For example, this inequality holds if the signals are independent conditional on the state. } 
\begin{enumerate}
   \item If $\underline l \le \underline \beta < \bar l$, then information does not aggregate if $\jcdf^1$ has conditionally independent signals (denoted by $\cind$) as long as $\cost > 0$; and
     $\exists \cost >0$ and a signal $\jcdf^1 \cad \cind$, such that information aggregates under $\jcdf^1$. 
     \item If $\underline \beta > \bar l$, then information does not aggregate for any $\jcdf^1 \cad \cind$ and any $\cost > 0$. 
\end{enumerate}
\eprop 

\bprf[Proof of Proposition~\ref{Proposition: information aggregation with similarity}]
Substituting the expression for Poisson pdf, we get 
\begin{align*}
    \beta(k, \jcdf,\bar P) =& \frac{\prior}{1-\prior } \frac{\jcdf^1(\bar P,\bar P) \ppdf(k,2\pop) + 2\jcdf^1(\bar P, \bar {NP}) \ppdf(k,\pop)}{\jcdf^0(\bar P,\bar P) \ppdf(k,2\pop) + 2\jcdf^0(\bar P, \bar {NP}) \ppdf(k,\pop)}\\
    =& \frac{\prior}{1-\prior } \frac{\mcdf^1(\bar P) + \jcdf^1(\bar P, \bar P) (e^{-\pop} 2^{k-1} - 1)}{\mcdf^0(\bar P) + \jcdf^0(\bar P, \bar P) (e^{-\pop} 2^{k-1} - 1)}
\end{align*}
Since $2^{k-1}$ is increasing in $k$, 
\begin{align*}
    \sign\left(\frac{\partial \beta(k,\jcdf,\bar P)}{\partial k}\right) = \sign(\jcdf^1(\bar P,\bar P) \mcdf^0(\bar P) - \mcdf^1(\bar P) \jcdf^0(\bar P,\bar P)).
\end{align*}
When signals are conditionally independent in state $1$, $\jcdf^1(\bar P,\bar P) = \mcdf^1(\bar P)^2$. Moreover, for any $\jcdf^1 \cad \cind$, $\jcdf^1(\bar P,\bar P) \ge \mcdf^1(\bar P)^2$. This implies
\begin{align*}
    \sign\left(\frac{\partial \beta(k,\jcdf,\bar P)}{\partial k}\right)= \sign(\mcdf^1(\bar P) \mcdf^0(\bar P) - \jcdf^0(\bar P,\bar P)) > 0.
\end{align*}
The last inequality is true by hypothesis. Therefore, for any $\jcdf$ such that $\jcdf^1 \cad \cind$, $\beta(k,\jcdf,\bar P)$ is increasing in $k$. Moreover, \begin{align*}
    \lim_{k \to \infty} \beta(k,\jcdf,\bar P) = \frac{\prior}{1-\prior}\frac{\jcdf^1(\bar P,\bar P)}{\jcdf^0(\bar P,\bar P)}
\end{align*}
When $\jcdf^1 = \cind$, 
\begin{align*}
    \lim_{k \to \infty} \beta(k,\jcdf,\bar P) = \underline l < \underline \beta
\end{align*}
Therefore, $\beta(k,\jcdf,\bar P) < \underline \beta$ for all $k$ whenever $\jcdf^1 = \cind$. Therefore, $\ind_{\bar P} \notin \eq(\jcdf)$ if $\jcdf^1 = \cind$ as long as $\cost > 0$. That is, information aggregation fails when signals are conditionally independent. 

In contrast, when $\jcdf^1 = \fullcor$, where $\fullcor$ means signals being fully correlated in state $1$, $$\lim_{k \to \infty} \beta(k,\jcdf,\bar P) = \bar l. $$

If $\bar l > \underline \beta$, $\exists k^*$ such that $\beta(k, \jcdf,\bar P) > \underline \beta$ for all $k \ge k^*$. For any $\jcdf^1 \cad \cind$, by Lemma~\ref{Lemma: CAD means transpose of size alpha}, $\jcdf^1(\bar P,\bar P) = \cind(\bar P,\bar P) + \a$, for some $\a \ge 0$. We know that, when $\a = 0$, $\beta(k,\jcdf,\bar P) < \underline \beta$ for all $k$, and, when $\a = \mcdf^1(\bar P) - \mcdf^1(\bar P)^2$, $\exists k^* \in \mathbb N$ such that $\beta(k,\jcdf,\bar P) > \underline \beta$ whenever $k \ge k^* $. Therefore, we can choose an $\a > 0$ small enough to construct $\jcdf^1$ so that $\jcdf^1(\bar P,\bar P) = \cind(\bar P,\bar P) + \a$ and $\beta(k,\jcdf,\bar P) \ge \underline \beta$ if and only if $k > k^*> 2 \pop$. Therefore, the policymaker would use a threshold of $\threshold = k^* > 2 \pop$ when $\jcdf^1$ constructed using $\a$ described above. It is easy to check that $\ppdf(\threshold, 2\pop) > \ppdf(\threshold,\pop)$ in this case. Claim \ref{Claim: when encouragement and P bar cost between IC-P and IC-NP} then establishes that, 
\begin{align*}
    \min_{\signal \in \bar P} \posterior(\signal) \left[ \jcdf^1_\signal(\bar P) \ppdf(\threshold, 2\pop) + \jcdf^1_\signal(\signalset \backslash \bar P) \ppdf(\threshold,\pop) \right] > \max_{\signal \in \signalset\backslash \bar P} \posterior(\signal) \jcdf^1_\signal(\bar P) \ppdf(\threshold, \pop)
\end{align*}
Therefore, by letting $\cost$ to be strictly between the LHS and the RHS of the above, we get that $\ind_{\bar P}$ is an equilibrium, for it satisfies \eqref{Equation: IC P} for all $\signal \in \bar P$ and \eqref{Equation: IC NP} for all $\signal \in \signalset \backslash \bar P$. Therefore, information aggregates under $\jcdf$ wherein $\jcdf^1 \cad \cind$. 

Finally, if $\underline \beta > \bar l$, then the policymaker would not change the status quo regardless of the turnout for any $\jcdf^1 \cad \cind$ establishing the last part of the Proposition. 
 
\bclaim\label{Claim: when encouragement and P bar cost between IC-P and IC-NP}If $\encop$, then 
\begin{align*}
    \min_{\signal \in \bar P} \posterior(\signal) \left[ \jcdf^1_\signal(\bar P) \ppdf(\threshold, 2\pop) + \jcdf^1_\signal(\signalset \backslash \bar P) \ppdf(\threshold,\pop) \right] > \max_{\signal \in \signalset\backslash \bar P} \posterior(\signal) \jcdf^1_\signal(\bar P) \ppdf(\threshold, \pop)
\end{align*}
for any $\jcdf^1 \cad \cind$. 
\eclaim 

\bprf 
First, by definition of $\bar P$, $\min_{\signal \in \bar P} \posterior(\signal) > \max_{\signal \in \signalset \backslash \bar P} \posterior(\signal)$. Also, $\cind_\signal(\bar P) = \mcdf^1(\bar P)$ is independent of $\signal$, and, $\jcdf^1_\signal(\bar P) \ge \mcdf^1(\bar P)$ for all $\signal \in \bar P$ and $\jcdf^1_\signal(\bar P) \le \mcdf^1(\bar P)$ for all $\signal \notin \bar P$. Therefore, $\min_{\signal \in \bar P} \jcdf^1_\signal(\bar P) \ge \max_{\signal \in \signalset \backslash \bar P} \jcdf^1_\signal(\bar P)$. The claim, then, follows due to $\encop$. 

\eprf 
\eprf 

Given an information structure $\jcdf$, we say an equilibrium $\strategy^*$ has ``\emph{maximally informative turnout''} if $I(\strategy^*) \ge I(\strategy)$ for all $\strategy \in \eq(\jcdf)$. We denote $I(\strategy^*)$ by $\infoness(\jcdf)$. Proposition \ref{Proposition: informativeness of equilibrium} below, shows that increasing similarity of information can reduce the informativeness of turnout.

\bprop\label{Proposition: informativeness of equilibrium}
Suppose that $\jcdf^1 \cad \jcdfhat^1$ and $\jcdf^0 = \jcdfhat^0$. Let $\threshold^*$ be the optimal participation threshold for the equilibrium with maximally informative turnout under $\jcdfhat$.
\begin{enumerate}
    \item If 
    $\ppdf(\threshold^*,2\pop) > 2 \ppdf(\threshold^*,\pop)$, then $\infoness(\jcdf) \ge \infoness(\jcdfhat)$ if $\max_{T \subset \signalset} \jcdf^1(T,T) - \jcdfhat^1(T,T)$ is sufficiently small. 
 
 \item If $\ppdf(\threshold^*,2\pop) < \ppdf(\threshold^*,\pop)$, then it is possible that $\infoness(\jcdf) < \infoness(\jcdfhat)$.
\end{enumerate}
\eprop

\bprf[Proof of Proposition \ref{Proposition: informativeness of equilibrium}]
Let $\strategy^*$ be the maximally informative equilibrium under $\jcdfhat$. If the policymaker continues to use $\threshold^*$ as the cutoff, then $\strategy^*$ continues to remain an equilibrium under $\jcdf$ due to Theorem~\ref{Theorem: CAD helps in encouragement}. While this takes care of the incentives of the participants, unlike the earlier arguments, we also need to ensure that a cutoff of $\threshold^*$ is indeed a best response for the policymaker. Since $\threshold^*$ is the cutoff for the maximally informative equilibrium $\beta(\threshold^*;\jcdfhat) \ge \underline \beta$ and $\beta(k;\jcdfhat) < \underline \beta$ for all $k < \threshold^*$. By Bayes' rule, we have, 
\begin{align*}
    \frac{\beta(k;\jcdfhat)}{1-\beta(k;\jcdfhat)}= \frac{\prior}{1-\prior} \frac{\jcdfhat^1(P,P) \ppdf(k,2\pop) + 2\jcdfhat^1(P,NP) \ppdf(k,\pop)}{\jcdfhat^0(P,P) \ppdf(k,2\pop) + 2\jcdfhat^0(P,NP) \ppdf(k,\pop)}
\end{align*}
Since $\ppdf(\threshold^*,2\pop) > 2\ppdf(\threshold^*,\pop)$, $\jcdf^1(P,P) = \jcdfhat^1(P,P) + \a$ and $\jcdf^1(P,NP) = \jcdfhat^1(P,NP) - \a$ for some $\a > 0$ by Lemma~\ref{Lemma: CAD means transpose of size alpha}, and $\jcdf^0 = \jcdfhat^0$, $\beta(\threshold^*;\jcdf) > \beta(\threshold^*;\jcdfhat) \ge \underline \beta$. However, it is now also possible that $\beta(k;\jcdf) \ge \underline \beta$ for some $k < \threshold^8$. Simply lowering the threshold in this case is not an option either as it affects the incentives of the agents, possibly destroying $\strategy^*$ as an equilibrium. However, when $\max_{T\subseteq \signalset} \jcdf^(T,T) - \jcdfhat^1(T,T)$ is sufficiently small, $\beta(k;\jcdf) < \underline \beta$. Finally, since $\infoness(\cdot)$ only depends on the marginal distributions, we obtain the desired inequality. 

For the second part, suppose that $\pop = 20$, $\signalset = \{0,1\}$, $\cost =     0.0368,$ and $\underline \beta = 0.7281$. Signals are conditionally independent in state $0$ with the marginal distribution $\mcdf^0(1) = 0.3$. In state $1$, $\jcdfhat^1(1,1) = 0.66, \jcdfhat^1(1,0) = 0.15$. $\jcdf^1$ is constructed from $\jcdfhat^1$ by using $\a = 0.05$. It is easy to see that $\threshold^*(\jcdfhat) = 28$, while the same no longer constitutes an equilibrium under $\jcdf$. In this case, if an informative equilibrium exists, it must involve mixing. It is easy to check that mixing can only happen on signal $1$, and agents continue to not participate when they receive a signal $0$. Therefore, informativeness under $\jcdf$ is strictly lower than under $\jcdfhat$. 

\eprf

\subsection{On optimal information similarity}\label{Section: optimal information structure}

Consider two extreme cases: conditionally independent signals and perfectly correlated signals. Suppose $Y=(Y_1,Y_2)$ where $Y_i$ is distributed according to $\mcdf^1$, and $Y_1,Y_2$ are independent. Denote this joint distribution by $\cind$. Analogously, let $Y=(Y_1,Y_2)$ be a random variable such that $Y_1 = Y_2$ a.s., and $Y_i$ is distributed according to $\mcdf^1$. We denote this joint distribution by  $\fullcor$. 
Given a conditionally independent signal distribution $\cind$, define $$CI^\uparrow:=\{\ydist \in \De(\signalset \times \signalset): \ydist\cad \cind\}$$
as all the signal distributions that are more similar (in the CAD sense) than $\cind$. Recall that by definition of CAD, all such distributions have the same marginal, and in this case, $\jcdf^0$ does not affect $\protsize(\cdot)$. Therefore, the designer solves the following problem: 
\begin{align*}
    \sup_{\jcdf^1 \in CI^\uparrow } \protsize^*(\jcdf^1,\jcdf^0)
\end{align*}

\bprop[Optimal information similarity]
\label{Proposition: optimal information structure}
An optimal information structure exists. In encouragement environments, fully correlated signals are optimal. In discouragement environments, if conditionally independent signals satisfy Condition M, then they are optimal. 
In other cases, intermediate levels of similarity can be optimal. 
\eprop

\bprf

We prove this using three steps. Steps 1 and 2 establish the existence of an optimal information structure, while Step 3 describes it.

\textbf{Step 1:} We show that $CI^\uparrow$ is weak-$^*$ compact. Consider a sequence $\{\ydist_m\}$ from $CI^{\uparrow}$ that converges to $\ydist$ in the sense that for all $f \in C(\signalset\times\signalset)$, $\int f\dd \ydist_m \to \int f\dd \ydist$. Consider a symmetric $\a \in \real_+^{ \signalset \times \signalset}$, i.e., $\a(i,j) = \a(j,i)$ for all $i,j$, and $\yhatdist \in \De(\signalset \times \signalset)$. Define,
\begin{align*}
    \ydist(i,j) =\yhatdist(i,j) -\a(i,j)\ind_{i\neq j} + \sum_{k \neq i} \a(i,k) \ind_{i=j}
\end{align*}

If $\ydist \in \De(\signalset \times \signalset)$, then we say that ``$\ydist$ is obtained from $\yhatdist$ by an ETI given by $\a$'', denoted by $\ydist = \yhatdist  \biguplus \a$. Recall from the proof of Lemma~\ref{Lemma: CAD means transpose of size alpha} that an alternative characterization of the CAD order (from Proposition 1 in \cite{meyer1990interdependence}) is 
\begin{align*}
    \ydist \cad \yhatdist \Longleftrightarrow \exists \a \in \real_+^{ \signalset\times \signalset} \text{ such that } \ydist = \yhatdist \biguplus \a. 
\end{align*} 
Since $\ydist_m \cad \cind$, we have a sequence $(\a_m) \in \real_+^{\signalset\times\signalset}$ such that $\ydist_m = \cind \biguplus \a_m$. Due to finiteness of $\vert \signalset \times \signalset\vert$, and boundedness of $\a_m$, $\{\a_m\}$ has a convergent subsequence, $\{\a_{m_k}\}$. Let $\a$ be a limit of one such convergent subsequence. Let $\tilde \ydist := \cind \biguplus \a$. Suppose, for contradiction, that $\tilde\ydist\neq \ydist$. Then, there exists some $(x,y) \in \signalset \times \signalset$ such that $\tilde\ydist(x,y) \neq \ydist(x,y)$. Consider a continuous function $f \in C(\signalset\times\signalset)$ such that $f(x',y') =1$ if $x'=x$ and $y'=y$, and $f(x',y')=0$, otherwise.\footnote{Such a function obviously exists because of the finiteness of $\signalset$.} By construction, $\int_{\signalset\times\signalset} f \dd \ydist_{m_k} \to \int_{\signalset\times\signalset} f \dd \tilde\ydist \neq \int_{\signalset\times\signalset} f \dd \ydist$. Hence, a contradiction. Therefore, $\tilde \ydist = \ydist = \cind \biguplus \a$, i.e., $\ydist \cad \cind$, and therefore, $\ydist \in CI^\uparrow$. Therefore, $CI^\uparrow$ is closed.  Finally, since $\signalset \times \signalset$ is compact, $\De(\signalset \times \signalset)$ is weak$-^*$ compact. This makes $CI^\uparrow$ weak$-^*$ compact. 

\textbf{Step 2:} Given compactness, consider a sequence $\{\ydist_m\} \in CI^\uparrow$ such that $\protsize(\ydist_m) \to \bar \protsize=\sup_{\ydist \in CI^\uparrow} \mprotsize(\ydist)$. By compactness, we can (wlog) assume $\ydist_m$ converges to some $\ydist\in CI^\uparrow$. For each $\ydist_m$, let $\strategy^*_m$ be a maximal turnout equilibrium, with participation and not-participation sets~$(P^*_m,NP^*_m)$. Due to the finiteness of $\signalset$, $\vert 2^\signalset\vert$ is finite, and therefore, wlog, $P^*_m = P^*$ for a sufficiently large $m$.\footnote{To be precise, it may be necessary to pass onto a subsequence for this to be true.} Since the marginals are unchanged, we have  $\protsize(\ydist_m)  = 2\pop \mcdf^1(P^*)$ for a sufficiently large $m$. Therefore, $\protsize(\ydist_m) = \bar \protsize$ for a sufficiently large $m$.

\textbf{Step 3:} Since more similarity increases maximal equilibrium turnout in encouragement environments (when $\enco$), we have 
$\mprotsize(\fullcor, \jcdf^0) \ge \mprotsize(\jcdf^1,\jcdf^0)$ for any $\jcdf^1 \in CI^\uparrow$. In discouragement environments (when $\disco$), 
since $\cind$ satisfies Condition M, any $\jcdf^1 \cad \cind$ has $\mprotsize(\jcdf^1,\jcdf^0) \le \mprotsize(\cind,\jcdf^0)$. To see that intermediate levels of similarity can be optimal otherwise, consider the example below. 
\begin{example}
$\popr_1,\popr_2$ follow a Poisson distribution with mean $\pop =15$ and deterministic threshold $\threshold = 20$. The cost of participation is $c= 0.009$ and the signal structure is as follows: $\signalset = \{1,2,3\}$, $\mu_0 = \frac12 $, $\mcdf^1 = [0.25, 0.3, 0.45], \mcdf^0 = [0.6, 0.35, 0.05]$. 
\end{example}

With $\cind$, the unique equilibrium is $\strategy = \ind_{s = 3}$. However, if we perform ETI on the square $\{(1,1),(1,2),(2,1),(2,2)\}$ with $\alpha = 0.005$, we can support an equilibrium $\strategy = \ind_{s\in\{2,3\}}$. Finally, it is easy to see that $\fullcor$ also cannot support $\ind_{s\in\{2,3\}}$ as an equilibrium. We can verify that $\posterior = [0.2941, 0.4615, 0.9000]$. When $P= \{2,3\}$, $\mcdf^1(P) = 0.75 = \cind_\signal(P)$ for all $\signal$ since signals are conditionally independent. $\ppdf(\threshold,\pop) = 0.0418, \ppdf(\threshold,2\pop) = 0.0134$. 
 \eprf

\subsection{Changing information similarity along with changing marginals}\label{Appendix: section on changing information similarity}

Throughout, we have assumed that the agents' marginal distribution of signals in any given state remains unchanged and it is only the joint distribution that changes. However, one may wish to understand the effect of changing information similarity without holding the marginal distribution the same. To this end, it is useful to think of signals as posteriors, as is standard following \cite{Kamenica2011}. That is, we assume that $\signalset$ is the set of posteriors with the natural order, and $\jcdf^1, \jcdf^0$ are two feasible joint distributions over posteriors. As before, we assume that $\signalset$ is finite. Similar to Definition~\ref{Definition: CAD}, we say that $Y \cad \widehat Y$ if only part (2) of that definition holds. That is, we require that, given own signal, an agent assigns a higher probability to the other agent receiving the signal, but we dispense away the requirement of the two signals having the same marginal distributions. This enables us to capture situations such as ones where the two groups exchange each other's information with some probability. We make the following additional assumption that maintain in this section. 
\bass\label{Assumption: increasing for cutoff sets for the information sharing section}Define $g(y,x):= \jcdf^1(\sigr_2 \ge y \vert \sigr_1 =x)$. For every $y$, $g(y,x_1) \ge g(y,x_2)$ if $y \le x_2 \le x_1$.  \eass

\blemma\label{Lemma: optimal equilibrium cutoff in the encouragement environment}Suppose that $\enco$ and $\jcdf^1$ satisfies Assumption~\ref{Assumption: increasing for cutoff sets for the information sharing section}. Then, the maximal equilibrium $\strategy^*$ under $\jcdf$ has a cutoff structure, i.e., $P(\strategy^*) = \signal^\uparrow$ for some some $\signal \in \signalset$.\footnote{$\signal^\uparrow$ is the upper contour set, i.e., the set of signal realizations greater than or equal to $x$.} \elemma

\bprf The proof follows similar steps as the proof of Lemma~\ref{Lemma: restrict to symmetric strategies w/o loss in enco}. Let $\strategy$ be an equilibrium. Define $\underline x$ to be the minimum element of $P(\strategy)$, where $\signalset$ is endowed with the natural order of posteriors. Then, for every $\signal \ge \underline x$, 

\begin{align*}
    \posterior(\signal) \left[\jcdf^1_\signal(\underline x^\uparrow) \pivtwo + (1-\jcdf^1_\signal(\underline x^\uparrow)) \pivone\right]
    & \ge \posterior(\signal) \left[\jcdf^1_{\underline \signal}(\underline x^\uparrow) \pivtwo + (1-\jcdf^1_{\underline \signal}(\underline x^\uparrow)) \pivone\right]\\
    & \ge \posterior(\underline x) \left[\jcdf^1_{\underline \signal}(\underline x^\uparrow) \pivtwo + (1-\jcdf^1_{\underline \signal}(\underline x^\uparrow)) \pivone\right] \ge \cost 
\end{align*}
where the first inequality follows from Assumption~\ref{Assumption: increasing for cutoff sets for the information sharing section} and the second inequality is due to $\posterior(\signal) \ge \posterior(\underline x)$ whenever $\signal \ge \underline x$. 

Therefore, \eqref{Equation: IC P} is satisfied for all $\signal \ge \underline x$. Hence, following identical steps as in Lemma~\ref{Lemma: restrict to symmetric strategies w/o loss in enco} with $P_1 = P_2 = \underline \signal^\uparrow$, we obtain that the maximal equilibrium is in symmetric, cut-off strategies. 
\eprf 

For the proposition below, we let $\jcdf, \jcdfhat$ be two joint distributions and let $\margdist^\state, \widehat\margdist^\state$ be the joint distribution of signals (and hence posteriors) they induce in state $\state$. Following \cite{kuvalekar2023wrong}, we now define what it means for a random variable to be ``more spread out'' than another. Similar notions have appeared in \cite{johnson2006simple}. 
\bdefn\label{Definition: More spread out} We say that a random variable $Y$ is ``more spread out around $y$'' than $\widehat Y$ if, $G_Y(z) \ge G_{\widehat Y}(z)$ for all $z \le y$ and $G_Y(z) \le G_{\widehat Y}(z)$ for all $z \ge y$. \edefn

\bprop\label{Proposition: changing marginal result}Suppose that $\enco$, $\jcdf^1 \cad \jcdfhat^1$, and $\margdist^1$ is more spread out around $\tilde \posterior$ than $\widehat\margdist^1$ for some $\tilde \posterior \le \cost$. Then, $\protsize^*(\jcdf) \ge \protsize^*(\jcdfhat)$. 
\eprop 

\bprf 
By Theorem~\ref{Theorem: CAD helps in encouragement}, if $\strategy \in \eq(\jcdfhat)$, then $\strategy \in \eq(\jcdf)$ whenever $\enco$. Notice that the proof of this part in Theorem~\ref{Theorem: CAD helps in encouragement} did not use the fact that the marginal distributions remained unchanged therein. By Lemma~\ref{Lemma: optimal equilibrium cutoff in the encouragement environment}, the maximal equilibrium is in cutoff strategies for $\jcdf$ and $\jcdfhat$. Let $\strategy(\jcdfhat)$ denote the maximal equilibrium under $\jcdfhat$. Since it has a cutoff structure, let $\widehat \signal$ be the associated cutoff. An agent would never participate on a signal such that $\posterior(\signal) < \cost$. Therefore, $ \posterior(\widehat\signal) \ge \cost$. Hence, \begin{align*}
    \protsize^*(\jcdfhat) = \widehat \margdist^1(\widehat \signal^\uparrow)\le \margdist^1(\widehat\signal^\uparrow)
\end{align*}
where the inequality follows due to the spread-out ranking of $\margdist^1$ and $\widehat \margdist^1$. Thus, $\protsize^*(\jcdf) \ge \protsize^*(\jcdfhat)$. 
\eprf 
Finally, unlike in Theorem~\ref{Theorem: CAD helps in encouragement}, more information can strictly lower participation even in the encouragement environment without an additional condition such as the spread-out order.\footnote{An example is available on request.} 

\subsection{Example from the introduction}\label{Appendix: intro example}
In this section, we analyze the example from the introduction in more detail. Recall the example. The payoff matrix is as follows. 
\begin{center}
\begin{table}[ht]
    \centering
    \begin{tabular}{cc|c|c|}
            & \multicolumn{1}{c}{} &                     \multicolumn{2}{c}{Bob}\\
            & \multicolumn{1}{c}{} & \multicolumn{1}{c}{work} & \multicolumn{1}{c}{shirk} \\ 
            \cline{3-4}
   \multirow{2}{*}{Abe}     & work &$\theta-c,\theta -c$& $q\theta -c, q \theta$ \\ 
            \cline{3-4}
    & shirk & $q\theta,q\theta - c$ &$0,0$ \\ 
   \cline{3-4}
\end{tabular}
\end{table}
    \end{center}
\vspace*{-6mm}
Abe and Bob each receives a binary signal $\sigr_i\in \{0,1\}$, such that  $\sigr_i = 0$ if $\stater = 0$, and if $\stater = 1$, the signals are drawn from some exchangeable joint distribution $\jcdf^1$. Table~\ref{tab:intro example} below describes~$\jcdf^1$. 

\begin{table}[h]
    \centering
    \begin{tabular}{c|c c|c}
        \toprule
        $\stater=1$ & $\sigr_{-i}=0$ & $\sigr_{-i}=1$ & \text{marg} \\
        \midrule
        $\sigr_{i}=0$ & $(1-p)^2+\alpha$ & $p(1-p)-\alpha$ & $1-p$\\
        $\sigr_{i}=1$ & $p(1-p)-\alpha$ & $p^2+\alpha$ & $p$ \\
        \midrule 
        \text{marg} & $1-p$ & $p$ & $1$ \\
        \bottomrule
    \end{tabular}
    \caption{Probability distribution of signals when $\stater =1$}
    \label{tab:intro example}
\end{table}
Notice that the marginal in state~$\stater=1$ is given by ~$P(\sigr_i=1|\stater=1)=p$, and $\posterior(0)\in(\frac{1}{2},1)$. 
Higher $\alpha$ makes Abe and Bob's signals more similar conditional on the state being~$\theta=1$: Signals are independent when $\alpha=0$ and perfectly correlated when $\alpha=p(1-p)$. Define 
$$\tilde p:=\jcdf^1_1(\sigr_{-i}=1)=p+\frac{\alpha}{p}.$$ 
Obviously, $\tilde p\geq p$ and the equality holds under conditional independence. Given the primitive $p$, we can say that the signals are more similar when $\tilde p$ is higher. 

For simplicity, we assumed in the example that a player never works after seeing $\sigr_i=0$, or formally,
  $$\frac{c}{\mu(0)}>1.$$
This assumption holds when $c>\frac{1-p}{1+1-p}$, or $p>\frac{1-2c}{1-c}$. If $c>\frac{1}{2}$, then this assumption always holds true, and when $c<\frac{1}{2}$, this assumption requires $p$ to be sufficiently high. For simplicity, we assume
$$c>\frac{1}{2}.$$

Let us start with a numerical example. 
Suppose that $\cost=0.6$ and $Prob(\sigr_i = 1 \vert \state = 1) = \frac13$. 

Consider signals that are independent conditional on $\stater=1$. Then, 
$\jcdf^1_1(\sigr_{-i}=1)=\frac13$. If $\jcdf^1_1(\sigr_{-i}=1)<\frac12$, a player's incentive to work is more influenced by whether she alone can make a difference, making the LHS of the incentive constraint increasing in $q$. Therefore, if $\strategy^1$ is an equilibrium for some $q$, it is an equilibrium for any $q'>q$. This gives us the first result with  $q^*=0.8$. 

Next, we increase the similarity of the signal as in Figure~\ref{Figure: CAD increase, binary, intro}, where $\alpha=\frac29$. This makes $\jcdf^1_1(\sigr_{-i}=1) =1$. If $\jcdf^1_1(\sigr_{-i}=1)>\frac12$, a player's incentive to work is more influenced by whether the other player works, making the LHS on the incentive constraint decreasing in $q$. Therefore, if $\strategy^1$ is an equilibrium for some $q$, it is an equilibrium for all $q'<q$. This gives us the second result with $q^{**}=0.4$.

Below, we show that the qualitative insight demonstrated in the introduction is valid even if we allowed asymmetric strategies or mixing. For any $\beta\in[0,1]$, define $I^\beta(\tilde p,q)$ to be the probability of being pivotal when working, given that the other player works with probability $\b$ when he receives a signal~$\sigr_{-i}=1$. $$I^1(\tilde p,q):=\tilde p (1-q)+(1-\tilde p)q.$$
$$I^0(\tilde p,q):=q.$$
For any $\beta\in(0,1)$, $$I^\beta(\tilde p,q):=\tilde p (\beta(1-q)+(1-\beta)q)+(1-\tilde p)q.$$
Note that $I^\beta(\tilde p,q)$ is decreasing in $\beta$ if $q>\frac{1}{2}$ and increasing in $\beta$ if $q<\frac{1}{2}$. If $I^0(\tilde p,q),I^1(\tilde p,q)<c$, then $I^\beta(\tilde p,q)<c$ for all $\beta$.  

In the introduction, we only considered the symmetric pure strategy profile~$\sigma^1$ in which both Abe and Bob work with probability $1$ after seeing $\sigr_{i}=1$. Now let~$\sigma^\beta$ denote symmetric mixed strategy profiles in which both players work with probability $\beta$ after seeing $\sigr_{i}=1$. Let $\sigma^a$ denote an asymmetric strategy profile in which exactly one of Abe and Bob works after seeing $\sigr_{i}=1$, and let ~$\sigma^0$ denote the strategy profile with no effort even after $\sigr_i=1$. We say an equilibrium is maximal if it maximizes the total expected effort by the two players. 

\begin{proposition}
    Assume $c>\frac{1}{2}$. 
    \begin{enumerate}
        \item If $q<\frac{1}{2}$ then 
        \begin{enumerate}
            \item for $\tilde p<c$, $\sigma^0$ is the maximal equilibrium 
            \item for $\tilde p>c$, there exists a threshold $q^{**}$ such that, $\sigma^1$ is the maximal equilibrium whenever $q\in(0,q^{**})$ and $\sigma^0$ is the maximal equilibrium in $q\in(q^{**},\frac{1}{2})$
        \end{enumerate}
        \item[] The threshold $q^{**}$ is increasing in $\tilde p$. Therefore, the region where $\strategy^1$ is the maximal equilibrium is increasing in the set order in $\tilde p$. Therefore, similarity helps participation.
         \item If $q>\frac{1}{2}$, then there exist two thresholds, $\widehat q, q^*$ such that the following holds:
        \begin{enumerate}
            \item for $\tilde p<1-c$,  $\sigma^0$ is the maximal equilibrium in $q\in(\frac{1}{2},c)$,  $\sigma^a$ is the maximal equilibrium in $q\in(c,\widehat q)$,  $\sigma^\beta$ is the maximal equilibrium in $q\in(\widehat q, q^*)$,  $\sigma^1$ is the maximal equilibrium in $q\in(q^*,1)$
            \item for $\tilde p>1-c$, $\sigma^0$ is the maximal equilibrium in $q\in(\frac{1}{2},c)$,  $\sigma^a$ is the maximal equilibrium in $q\in(c,\widehat q)$,  $\sigma^\beta$ is the maximal equilibrium in $q\in(\widehat q, 1)$.
        \end{enumerate}
        \item[]  The thresholds $\widehat q,q^*$ are increasing in $\tilde p$, and the mixing probability $\beta$ is decreasing in $\tilde p$. Therefore, similarity hurts participation.
    \end{enumerate}

\end{proposition}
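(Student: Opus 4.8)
The plan is to collapse every equilibrium condition into a comparison between a pivotal probability and $c$. The key simplification is that the signal structure makes $\sigr_i=1$ perfectly revealing: since $\sigr_i=0$ whenever $\stater=0$, seeing $\sigr_i=1$ implies $\stater=1$, so $\mu(1)=1$ and a player who observes $\sigr_i=1$ raises the completion probability by exactly her pivotal probability. Hence she works iff that pivotal probability is at least $c$; the maintained $c>\tfrac12$ (and $c/\mu(0)>1$) keeps her from working after $\sigr_i=0$. First I would argue the only candidate profiles are $\sigma^0,\sigma^1,\sigma^a$ and the symmetric mixtures $\sigma^\beta$: if both players mix interior to $(0,1)$, each player's indifference condition pins the \emph{other}'s mixing probability to the common value $\tfrac{q-c}{\tilde p(2q-1)}$, forcing symmetry, while a pure/mixed hybrid needs the knife-edge $I^1(\tilde p,q)=c$. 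I would then record the expected efforts (conditional on $\stater=1$): $2p$ for $\sigma^1$, $p$ for $\sigma^a$, $2\beta p$ for $\sigma^\beta$, and $0$ for $\sigma^0$, so that comparing these numbers — in particular $\sigma^\beta\succ\sigma^a\iff\beta>\tfrac12$ — selects the maximal equilibrium.

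The driver of both parts is the monotonicity of $I^\beta(\tilde p,q)=q+\tilde p\,\beta(1-2q)$ in $\beta$, increasing when $q<\tfrac12$ and decreasing when $q>\tfrac12$, with $I^0=q$ and $I^1=q+\tilde p(1-2q)$. For Part 1 ($q<\tfrac12$), $I^\beta$ is increasing in $\beta$, so $\sigma^1$ is maximal whenever it is an equilibrium; $\sigma^0$ is always an equilibrium since $I^0=q<\tfrac12<c$; and $\sigma^a$ never is, because the lone worker faces $I^0=q<c$. Because $\sup_{q\in(0,1/2)}I^1=\max\{\tilde p,\tfrac12\}$, when $\tilde p<c$ we get $I^1<c$ throughout, so no $\sigma^1$ or $\sigma^\beta$ survives and $\sigma^0$ is maximal (giving 1(a)); when $\tilde p>c$, solving $I^1=c$ gives $q^{**}=\tfrac{\tilde p-c}{2\tilde p-1}\in(0,\tfrac12)$ with $\sigma^1$ an equilibrium iff $q\le q^{**}$ and $\sigma^0$ maximal on $(q^{**},\tfrac12)$ (giving 1(b)). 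Differentiating, $\partial q^{**}/\partial\tilde p=\tfrac{2c-1}{(2\tilde p-1)^2}>0$, so the $\sigma^1$ region expands with similarity.

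For Part 2 ($q>\tfrac12$), $I^\beta$ is decreasing in $\beta$, so $I^0=q\ge I^\beta\ge I^1$. Here $\sigma^0$ is an equilibrium iff $q\le c$; on $(\tfrac12,c)$ this makes $\sigma^0$ maximal since $I^1<q<c$ kills $\sigma^1$ and $q<c$ kills $\sigma^a$. For $q>c$ I would organize everything around $h(q):=\tfrac{q-c}{2q-1}$ (increasing, with $h(c)=0$, $h(1)=1-c$): $\sigma^1$ is an equilibrium iff $\tilde p\le h(q)$; $\sigma^a$ iff $\tilde p\ge h(q)$ (the shirker faces $I^1\le c$); and $\sigma^\beta$ iff $\tilde p>h(q)$, with mixing probability $\beta=h(q)/\tilde p$ rising from $0$. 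The maximal equilibrium is therefore $\sigma^a$ while $\beta<\tfrac12$, flips to $\sigma^\beta$ at $\widehat q=\tfrac{2c-\tilde p}{2(1-\tilde p)}$ (where $\beta=\tfrac12$), and becomes $\sigma^1$ once $\beta$ would hit $1$, i.e.\ at $q^*=\tfrac{c-\tilde p}{1-2\tilde p}$. One checks $c<\widehat q<q^*$ from $c>\tfrac12$, that $q^*\in(c,1)$ iff $\tilde p<1-c$ (Part 2(a)), and that $\sigma^1$ never appears when $\tilde p>1-c$ since then $h(q)<1-c<\tilde p$ for all $q<1$ (Part 2(b)). Finally $\partial\widehat q/\partial\tilde p>0$, $\partial q^*/\partial\tilde p>0$, and $\partial\beta/\partial\tilde p=-h(q)/\tilde p^2<0$ give the ``similarity hurts'' comparative statics.

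The hard part will be the bookkeeping in Part 2, where $\sigma^a$ and $\sigma^\beta$ \emph{coexist} for $q>c$ and $\tilde p>h(q)$: selecting the maximal one is not an existence question but an expected-effort comparison turning on $\beta\gtrless\tfrac12$, and I must separately confirm no other profile beats it (asymmetric mixtures collapse to $\sigma^\beta$ by the indifference step; hybrids are knife-edge). A secondary chore is confirming the three thresholds $c,\widehat q,q^*$ are correctly ordered and lie in $(\tfrac12,1)$ over the relevant ranges of $\tilde p$ — in particular that the $\sigma^\beta$ window is nonempty — which is exactly where $c>\tfrac12$ is used repeatedly.
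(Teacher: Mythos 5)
Your proposal is correct and follows essentially the same route as the paper's proof: classify each candidate profile by comparing the pivotal probability $I^\beta(\tilde p,q)$ to $c$, exploit its monotonicity in $\beta$ (increasing for $q<\tfrac12$, decreasing for $q>\tfrac12$) to locate the thresholds $q^{**}=\tfrac{\tilde p-c}{2\tilde p-1}$, $q^*=\tfrac{c-\tilde p}{1-2\tilde p}$, $\widehat q=\tfrac{2c-\tilde p}{2(1-\tilde p)}$, and select the maximal equilibrium by the expected-effort comparison $\beta\gtrless\tfrac12$. Your explicit argument that interior mixing forces symmetry (each indifference condition pins the opponent's probability to $h(q)/\tilde p$) is a small completeness improvement over the paper, which leaves that step implicit.
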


\begin{proof}

Define 
\begin{align*}
  q^*(\tilde p):=&\max\left\{\min\left\{\frac{c-\tilde p}{1-2\tilde p},1\right\},0\right\}  \\
  q^{**}(\tilde p):=&\max \{ \min \left\{\frac{\tilde p-c}{2\tilde p-1},1\right\},0\}\\
  \widehat q (\tilde p):=&\max \left\{ \min \left\{\frac{1}{2}\left( \frac{2c-1}{1-\tilde p}+1\right),1\right\},0\right\}.
\end{align*}

All three functions are non-decreasing in $\tilde p$. Since $q^*(0)=\widehat q(0)=c$, for any $\tilde p$, $q^*(\tilde p),\widehat q(\tilde p) \geq c$. Recall that $c>\frac{1}{2}$. On the other hand, for any $\tilde p$, $q^{**}(\tilde p)\leq q^{**}(1)=1-c<\frac{1}{2}$. Also, note that $q^*(1-c)=1$ and $q^{**}(c)=0$. 

\bclaim
\label{claim:sigma1}
For any $\tilde p\in [p,1]$, $\sigma^1$ is the maximal equilibrium if $q\geq q^*(\tilde p)$ or if $q\leq q^{**}(\tilde p)$. Moreover, if $\tilde p\in(1-c,c)$, then $q^*(\tilde p)=1$ and $q^{**}(\tilde p)=0$, and $\sigma^1$ is not an equilibrium for any $q\in[0,1]$. 
\eclaim
\bprf
$\sigma^1$ is an equilibrium if $I^1(\tilde p,q)\geq c$. If $\sigma^1$ is an equilibrium, then it is the maximal equilibrium. Suppose that $\tilde p\leq \frac{1}{2}$. Then, $I^1(\tilde p,q)$ is increasing in $q$. Therefore, there exists $q^*(\tilde p)$ (defined above)
such that for $q\geq q^*(\tilde p)$, $I^1(\tilde p,q)\geq c$, i.e, $\sigma^1$ is the maximal equilibrium. Next, suppose that $\tilde p\geq \frac{1}{2}$. Then, $I^1(\tilde p,q)$ is decreasing in $q$. Therefore, there exists $q^{**}(\tilde p)$ (defined above) such that for $q\leq q^{**}(\tilde p)$, $I^1(\tilde p,q)\geq c$, i.e, $\sigma^1$ is the maximal equilibrium. 

If $\tilde p<1-c$, then $q^{**}(\tilde p)=0$ and $q^*(\tilde p)\in (c,1)$. Therefore, $\sigma^1$ is not an equilibrium for $q<\frac{1}{2}$, whereas it is an equilibrium only if $q$ is sufficiently high. If $\tilde p=1-c$, then $q^*(\tilde p)=\frac{c-\tilde p}{1-2\tilde p}=1$, and $\sigma^1$ is an equilibrium only if $q=1$. For $\tilde p \in (1-c,c)$, $\sigma^1$ is not an equilibrium for any $q$.  If $\tilde p=c$, then $q^{**}(\tilde p)=\frac{\tilde p-c}{2\tilde p-1}=0$, and $\sigma^1$ is an equilibrium only if $q=0$. If $\tilde p>c$, then $q^{**}(\tilde p)\in(0,1-c)$ and $q^*(\tilde p)=1$. Therefore, $\sigma^1$ is not an equilibrium for $q>\frac{1}{2}$, whereas it is an equilibrium only if $q$ is sufficiently low.

\eprf

\bclaim
\label{claim:sigma0}
For any $\tilde p\in [p,1]$, $\sigma^0$ is the maximal equilibrium if $q\in (q^{**}(\tilde p),c)$. 
\eclaim

\bprf
If $q<c\leq q^*(\tilde p)$ and $q>q^{**}(\tilde p)$, then $\sigma^1$ is not an equilibrium, i.e., $I^1(\tilde p, q)<c$. Moreover $I^0(\tilde p,q)=q<c$, which means $\sigma^a$ cannot be an equilibrium either. Recall that if $I^1(\tilde p, q)<c$ and $I^0(\tilde p,q)<c$, then $I^\b(\tilde p,q)<c$ for any $\b$, which means $\sigma^\b$ cannot be an equilibrium either. Note that $\sigma^0$ is always an equilibrium for $q<c$. Therefore, it is the only equilibrium, and hence, the maximal equilibrium.  
\eprf 

Since $c>\frac{1}{2}$, it follows from the above two claims that for $q<\frac{1}{2}$, the maximal equilibrium is either $\sigma^1$ or $\sigma^0$. Recall that $q^{**}(\tilde p)=0$ for all $\tilde p<c$. Therefore, for $\tilde p<c$ the maximal equilibrium is $\sigma^0$ for all $q\in[0,\frac{1}{2}]$ and for  $\tilde p>c$ the maximal equilibrium is $\sigma^1$ for all $q\in[0,q^{**}(\tilde p)]$ and $\sigma^0$ for all $q\in[q^{**}(\tilde p),\frac{1}{2}]$.  This proves the first part of the proposition. 

\bclaim
For any $\tilde p\in [p,1]$, $\sigma^a$ is the maximal equilibrium if $q\in [c, \widehat q(\tilde p)]$ and $\sigma^\b$ is the maximal equilibrium if $q\in [\widehat q(\tilde p), q^*(\tilde p)]$. 
\eclaim

\bprf 

For any $\tilde p$, $\widehat q (\tilde p)\in [c,q^*(\tilde p)]$. It follows from Claim \ref{claim:sigma1} that $\sigma^1$ is not an equilibrium in the interval $[c,q^*(\tilde p)]$. In this interval, $\sigma^a$ is an equilibrium since $I^0(\tilde p,q)=q\geq c$. In this interval, there is also a mixed strategy equilibrium $\sigma^\b$ where the players are indifferent between exerting effort and not exerting effort after seeing signal $1 $, i.e., $I^\beta(\tilde p,q)=c$. This gives us the equilibrium  probability of effort
$$\b=\frac{q-c}{(2q-1)\tilde p}=\frac{1}{\tilde p} \left( \frac{1}{2}-\frac{c-\frac{1}{2}}{2q-1}\right).$$
 The expected effort under $\sigma^a$ is $p$ whereas that under $\sigma^\b$ is $2\beta p$. Therefore, in this interval, $\sigma^\b$ is the maximal equilibrium if $\beta\geq \frac{1}{2}$ and $\sigma^a$ is the maximal equilibrium, otherwise. $\b \geq \frac{1}{2}$ is equivalent to $q\geq \widehat q(\tilde p)$. Since $\b$ is decreasing in $\tilde p$ and increasing in $q$, $\widehat q(\tilde p)$ increasing in $\tilde p$.

\eprf 

Recall that for any $\tilde p$, $\widehat q(\tilde p),q^*(\tilde p)\geq c>\frac{1}{2}$ and $q^*(\tilde p)=1$ for $\tilde p>1-c$. Therefore, under $\tilde p<1-c$, for $q>\frac{1}{2}$, $\sigma^0$ is the maximal equilibrium in $q\in(\frac{1}{2},c)$,  $\sigma^a$ is the maximal equilibrium in $q\in(c,\widehat q(\tilde p))$,  $\sigma^\beta$ is the maximal equilibrium in $q\in(\widehat q,(\tilde p) q^*(\tilde p))$,  $\sigma^1$ is the maximal equilibrium in $q\in(q^*(\tilde p),1)$. Finally, under $\tilde p>1-c$, $q^*(\tilde p)=1$ and $\sigma^1$ is not an equilibrium at all. This proves the second part of the proposition. 
 
\end{proof}

Figure \ref{fig:binaryexample} provides a visual description of how the maximal equilibrium changes with information similarity. In this simple example, under conditionally independent signals $\tilde p=p<1-c<\frac{1}{2}$, $I^1(\tilde p,q)$ increases in $q$. That is, a player has a higher incentive to work if he is more likely to complete the project on his own. Therefore, there is a $q^*>c$ such that $\sigma^1$ is an equilibrium for $q>q^*$. For 
$q<q^*$, $\sigma^1$ cannot be sustained as an equilibrium. However, if $q>c$, there is an asymmetric and mixed equilibrium. A higher $\tilde p$ means $I^1(\tilde p, q)$ is lower (since $q>\frac{1}{2}$). Thus, similar information reduces a player's incentive to work for high $q$, and it becomes more difficult to sustain $\sigma^1$ as an equilibrium ($q^*$ increases). In fact, if $\tilde p>1-c$, then $q^*=1$, i.e., $\sigma^1$ cannot be an equilibrium for any $q>\frac{1}{2}$. Even for the mixed strategy equilibrium, the probability of working must decrease to sustain an equilibrium.  As $\tilde p$ increases further ($\tilde p>c$), then $I^1(\tilde p,q)>c$ for sufficiently small values of $q$ ($q<q^{**}$). That is, players who are unlikely to complete the project on their own, are now willing to work because it is very likely ($\tilde p>c$), that when they see $1$ the other also sees $1$. As $\tilde p$ increases further, $I^1(\tilde p,q)$ increases, that is, it becomes easier to sustain $\sigma^1$ as an equilibrium for low values of $q$, and accordingly $q^{**}$ increases.

\begin{figure}[h!]
\centering
\begin{tikzpicture}
\draw[thick,-](0,0) to (10,0); 
\node[circle,fill=black,inner sep=0pt,minimum size=3pt,label=below:{$0$}] (0) at (0,0)  {};

\node[circle,fill=black,inner sep=0pt,minimum size=3pt,label=below:{$1$}] (1) at (10,0)  {};

\node[circle,fill=black,inner sep=0pt,minimum size=3pt,label=below:{$1/2$}] (1/2) at (5,0)  {};

\node[circle,fill=black,inner sep=0pt,minimum size=3pt,label=below:{$c$}] (c) at (6,0)  {};

\node[circle,fill=black,inner sep=0pt,minimum size=3pt,label=below:{$\widehat q$}] (qhat) at (7,0)  {};

\node[circle,fill=black,inner sep=0pt,minimum size=3pt,label=below:{$q^*$}] (qstar) at (8.5,0)  {};

\node at (3,1)[above]{$\sigma^0$};

\draw [ultra thick, decorate,
    decoration = {brace,
        raise=5pt,
        amplitude=10pt}] (0,0.5) --  (6,0.5);
        \node at (3,1)[above]{$\sigma^0$};

\node at (6.5,1)[above]{$\sigma^a$};        

\draw [ultra thick, decorate,
    decoration = {brace,
        raise=5pt,
        amplitude=10pt}] (6,0.5) --  (7,0.5);

\node at (7.75,1)[above]{$\sigma^\beta$};        

\draw [ultra thick, decorate,
    decoration = {brace,
        raise=5pt,
        amplitude=10pt}] (7,0.5) --  (8.5,0.5);    

\node at (9.25,1)[above]{$\sigma^1$};        

\draw [ultra thick, decorate,
    decoration = {brace,
        raise=5pt,
        amplitude=10pt}] (8.5,0.5) --  (10,0.5);

\node at (5,-2)[above]{$\tilde p<1-c$};        

\draw[thick,-](0,-4) to (10,-4); 
\node[circle,fill=black,inner sep=0pt,minimum size=3pt,label=below:{$0$}] (0) at (0,-4)  {};

\node[circle,fill=black,inner sep=0pt,minimum size=3pt,label=below:{$1$}] (1) at (10,-4)  {};

\node[circle,fill=black,inner sep=0pt,minimum size=3pt,label=below:{$1/2$}] (1/2) at (5,-4)  {};

\node[circle,fill=black,inner sep=0pt,minimum size=3pt,label=below:{$c$}] (c) at (6,-4)  {};

\node[circle,fill=black,inner sep=0pt,minimum size=3pt,label=below:{$\widehat q$}] (qhat) at (7.5,-4)  {};


\node at (3,-3)[above]{$\sigma^0$};

\draw [ultra thick, decorate,
    decoration = {brace,
        raise=5pt,
        amplitude=10pt}] (0,-3.5) --  (6,-3.5);
        \node at (3,1)[above]{$\sigma^0$};

\node at (6.75,-3)[above]{$\sigma^a$};        

\draw [ultra thick, decorate,
    decoration = {brace,
        raise=5pt,
        amplitude=10pt}] (6,-3.5) --  (7.5,-3.5);

\node at (8.75,-3)[above]{$\sigma^\beta$};        

\draw [ultra thick, decorate,
    decoration = {brace,
        raise=5pt,
        amplitude=10pt}] (7.5,-3.5) --  (10,-3.5);

\node at (5,-6)[above]{$\tilde p\in(1-c,c)$};     

\draw[thick,-](0,-8) to (10,-8); 
\node[circle,fill=black,inner sep=0pt,minimum size=3pt,label=below:{$0$}] (0) at (0,-8)  {};

\node[circle,fill=black,inner sep=0pt,minimum size=3pt,label=below:{$1$}] (1) at (10,-8)  {};

\node[circle,fill=black,inner sep=0pt,minimum size=3pt,label=below:{$1/2$}] (1/2) at (5,-8)  {};

\node[circle,fill=black,inner sep=0pt,minimum size=3pt,label=below:{$c$}] (c) at (6,-8)  {};

\node[circle,fill=black,inner sep=0pt,minimum size=3pt,label=below:{$\widehat q$}] (qhat) at (8,-8)  {};

\node[circle,fill=black,inner sep=0pt,minimum size=3pt,label=below:{$q^{**}$}] (qstarstar) at (3,-8)  {};

\node at (1.5,-7)[above]{$\sigma^1$};

\draw [ultra thick, decorate,
    decoration = {brace,
        raise=5pt,
        amplitude=10pt}] (0,-7.5) --  (3,-7.5);
        \node at (3,1)[above]{$\sigma^0$};

\node at (4.5,-7)[above]{$\sigma^0$};

\draw [ultra thick, decorate,
    decoration = {brace,
        raise=5pt,
        amplitude=10pt}] (3,-7.5) --  (6,-7.5);
        \node at (3,1)[above]{$\sigma^0$};

\node at (7,-7)[above]{$\sigma^a$};        

\draw [ultra thick, decorate,
    decoration = {brace,
        raise=5pt,
        amplitude=10pt}] (6,-7.5) --  (8,-7.5);

\node at (9,-7)[above]{$\sigma^\beta$};        

\draw [ultra thick, decorate,
    decoration = {brace,
        raise=5pt,
        amplitude=10pt}] (8,-7.5) --  (10,-7.5);

\node at (5,-10)[above]{$\tilde p>c$};   
        
\end{tikzpicture}
\caption{Information Similarity and Maximal Equilibrium}
\label{fig:binaryexample}
\end{figure}

\end{document}